\documentclass[reqno]{amsart}
\usepackage{amsmath,amssymb,cite}
\usepackage[mathscr]{euscript}


\theoremstyle{plain}
\newtheorem{theorem}{Theorem}[section]

\newtheorem{lemma}[theorem]{Lemma}

\theoremstyle{definition}

\theoremstyle{remark}
\newtheorem{remark}[theorem]{Remark}


\numberwithin{equation}{section}
\numberwithin{theorem}{section}

\def\be{\begin{equation}}
\def\ee{\end{equation}}
\def\bp{\begin{pmatrix}}
\def\ep{\end{pmatrix}}
\def\bea{\begin{eqnarray}}
\def\eea{\end{eqnarray}}

\def\\{\par\medskip}

\let\0=\noindent

\def\const{({\rm const.})\,}

\newcommand{\mc}[1]{{\mathcal #1}}

\newcommand{\bb}[1]{{\mathbb #1}}

\newcommand{\rme}{\mathrm{e}}

\newcommand{\rmd}{\mathrm{d}}
\newcommand{\eps}{\varepsilon}




\newcommand{\supp}{\mathop{\rm supp}\nolimits}

\title[Long time evolution of concentrated Euler flows]{Long time evolution of concentrated Euler flows with planar symmetry}

\author[P.\ Butt\`a]{Paolo Butt\`a}
\address{Paolo Butt\`a\hfill\break \indent
   Dipartimento di Matematica, 
   Sapienza Universit\`a di Roma,
      \hfill\break \indent
   P.le Aldo Moro 5, 00185 Roma, Italy}
 \email{butta@mat.uniroma1.it}
\author[C.\ Marchioro]{Carlo Marchioro}
\address{Carlo Marchioro \hfill\break \indent
   Dipartimento di Matematica, 
   Sapienza Universit\`a di Roma,
   \hfill\break \indent
   P.le Aldo Moro 5, 00185 Roma, Italy \hfill\break \indent and \hfill\break \indent  International Research Center M\&MOCS, Universit\`a di L'Aquila \hfill\break \indent Palazzo Caetani, 04012 Cisterna di Latina (LT), Italy}
\email{marchior@mat.uniroma1.it}

\begin{document}

\begin{abstract}
We study the time evolution of an incompressible Euler fluid with planar symmetry when the vorticity is initially concentrated in small disks. We discuss how long this concentration persists, showing that in some cases this happens for quite long times. Moreover, we analyze a toy model that shows a similar behavior and gives some hints on the original problem.
\end{abstract}
\keywords{Incompressible Euler flow, point vortex model, long time behavior}

\subjclass[2010]{
76B47,  
37N10,  
70K65.  
}

\maketitle
\thispagestyle{empty}

\section{Introduction}
\label{sec:1}

This paper focuses on the long time behavior of an incompressible inviscid fluid, with  planar symmetry and constant density, whose time evolution is governed by the two-dimensional Euler equations. If the system is confined in a domain $\Gamma\subseteq \bb R^2$, the Euler equations expressed in term of the vorticity read,
\begin{eqnarray}
\label{Eq.1}
&& \partial_t\omega (x,t) + (u\cdot\nabla)\omega(x,t) = 0\;, \quad x=(x_1,x_2)\in \Gamma\;, 
\\ \label{Eq.2}
&& \nabla \cdot u(x,t)=0\;, 
\\ \label{Eq.3}
&& \omega(x,0) = \omega_0(x)\;,
\end{eqnarray}
where $\omega := \partial_1 u_2-\partial_2 u_1 $ is the vorticity and $u=(u_1,u_2)$ denotes the velocity field. By assuming that $u$ vanishes at infinity and that its normal component vanishes on $\partial\Gamma$, the velocity is reconstructed from the vorticity as
\begin{equation}
\label{vel-vor1}
u(x,t) = \int\! \rmd y\, K_\Gamma(x,y)\,\omega(y,t)\;,
\end{equation}
with
\begin{equation}
\label{vel-vor2}
K_\Gamma = \nabla^\perp  G_\Gamma\;, \quad \nabla^\perp = (\partial_2,-\partial_1)\;,
\end{equation}
where $G_\Gamma$ is the fundamental solution of the Laplace operator in $\Gamma$ vanishing on the boundary (and at infinity if $\Gamma$ is unbounded). In particular, if $\Gamma=\bb R^2$, 
\begin{equation}
\label{vel-vor3}
K_{\bb R^2}(x,y) = K(x-y) = \nabla^\perp  G(x-y)\;, \quad G(x) = - \frac{1}{2\pi} \log|x|\;.
\end{equation}

Equation \eqref{Eq.1} means that the vorticity remains constant along the particle paths, which are the characteristics of the Euler equations. Otherwise stated,
\begin{equation}
\label{Cons1}
\omega(x(x_0,t),t) = \omega_0(x_0) \ ,
\end{equation}
where  $x(x_0,t)$ is the trajectory of the fluid particle initially in $x_0$, i.e., 
\begin{equation}
\label{Cons2}
\frac{\rmd}{\rmd t}x(x_0,t) = u(x(x_0,t),t)\;, \quad x(x_0,0) = x_0\;.
\end{equation}

It is possible to consider non smooth initial data, by assuming directly \eqref{vel-vor1}, \eqref{Cons1}, and \eqref{Cons2} as a weak formulation of the Euler equations. Indeed, see, e.g., \cite{MaP94}, given $\omega_0\in L^1(\Gamma)\cap L^\infty (\Gamma)$ and $T>0$, there exists a unique triple $(x(\cdot,\cdot),u,\omega)$ solution to \eqref{vel-vor1}, \eqref{Cons1}, and \eqref{Cons2} with $\omega \in L^\infty([0,T]; L^1\cap L^\infty)$. Moreover, since $u$ is divergence-free, the time evolution preserves the Lebesgue measure in $\Gamma$. In particular, given any smooth function $f(x,t)$ with compact support in $\Gamma\times [0,T]$,  if
\begin{equation}
\label{Aver}
\omega_t[f] := \int_\Gamma\! \rmd x\,\omega (x,t) f(x,t) = \int_\Gamma\! \rmd x_0\,\omega_0 (x) f(x(x_0,t),t) 
\end{equation}
then $t\mapsto \omega_t[f]$ belongs to $C^1([0,T])$ and 
\begin{equation}
\label{Weakeq}
\frac{\rmd}{\rmd t}\omega_t[f] = \omega_t [u \cdot \nabla f] + \omega_t[\partial_t f]\;.
\end{equation}
We remark that if $\omega_0$ has compact support then \eqref{Aver} and \eqref{Weakeq} are valid for any smooth function $f(x,t)$ (also with noncompact support).

In this paper we consider initial data in which the vorticity is supported in $N$ blobs, i.e., initial data of the form,
\begin{equation}
\label{in}
\omega_\eps(x,0) = \sum_{i=1}^N  \omega_{i,\eps}(x,0)\;,
\end{equation}
where $\omega_{i,\eps}(x,0)$, $i=1,\ldots, N$, are functions with definite sign such that, denoting by $\Sigma(z|r)$ the open disk of center $z$ and radius $r$, 
\begin{equation}
\label{initial}
\Lambda_{i,\eps}(0) := \supp\, \omega_{i,\eps}(\cdot,0) \subset \Sigma(z_i|\eps)\;, \quad  \Sigma(z_i|\eps)\cap \Sigma(z_j|\eps)=\emptyset\quad \forall\, i \ne j\;, 
\end{equation}
with $\eps \in (0,1)$ a small parameter and the points $z_i\in \Gamma$ such that the closure of $\Sigma(z_i|\eps)$ does not intersect the boundary of $\Gamma$ for any $i=1,\ldots,N$. In general, the signs of the functions $\omega_{i,\eps}(x,0)$ can be different among each other.

As is well known in the literature, for such initial data the dynamics can be approximated by the following system of $N$ differential equations in $\Gamma$, known as the \textit{point vortex model},
\begin{equation}
\label{Pointv}
\dot{z}_i(t) = \sum_{\substack{j=1 \\ j\ne i}}^N a_j K_\Gamma(z_i(t),z_j(t)) + \frac 12 a_i \nabla^\perp \gamma_\Gamma(z_i(t)) \;, \quad z_i(0)=z_i\;, 
\end{equation}
where
\begin{equation}
\label{point vi}
a_i = \int_\Gamma\! \rmd x \, \omega_{i,\eps}(x,0)
\end{equation}
is called the ``intensity'' of the vortex and it is assumed independent of $\eps$, while  $\gamma_\Gamma(x) = \gamma_\Gamma(x,x)$ with $\gamma_\Gamma(x,y) :=  G_\Gamma(x,y) - G(x,y)$, the ``regular part'' of the Green function $G_\Gamma$. 

In particular, it has been proved \cite {MaP93,MaP94,Mar98,CaM} that the time evolution of these states has, for small $\eps$, a similar form,
\begin{equation}
\label{evolved}
\omega_\eps(x,t) = \sum_{i=1}^N  \omega_{i,\eps}(x,t)\;, 
\end{equation}
where $\omega_{i,\eps}(x,t)$, $i=1,\ldots,N$, are functions with definite sign such that
\begin{equation}
\label{evolvedbis}
\begin{split}
& \Lambda_{i,\eps}(t) := \supp\, \omega_{i,\eps}(\cdot,t) \subset \Sigma(z_i(t) | r_t(\eps))\;, \\ & \Sigma(z_i(t) | r_t(\eps))\cap \Sigma(z_j(t) | r_t(\eps)) = \emptyset \quad \forall\, i \ne j\;, 
\end{split}
\end{equation}
with $\{z_i(t)\}_{i=1}^N$ satisfying \eqref{Pointv} and $r_t(\eps)$ a nonnegative function such that the closure of $\Sigma(z_i|r_t(\eps))$ does not intersect the boundary of $\Gamma$ for any $i=1,\ldots,N$.

The point vortex model \eqref{Pointv} was introduced in the eighteenth century by Helmholtz \cite{Hel}, as a particular ``solution'' of the Euler equations, and investigated by several authors \cite {Kir,Poi,Kel}, see \cite{MaP94} and references quoted therein for a review on this subject. This model approximates reasonably a state with very large vorticity concentrations.

In general, the point vortex model admits a global solution, but in some cases collapses can happen \cite {Are07}. However, it can be shown that the set of initial data and vortex intensities that produce a blow-up  is exceptional; see, respectively, \cite {DuP,MaP84,MaP94} for the proof in the case of the torus, the disk, and the plane. Moreover, there are initial data for which the vortices move away from each other indefinitely (we will return to this point later on).  

In any case, for each time $t$ chosen before a possible collapse, it can be proved that $r_t(\eps) \to 0$ for $\eps\to 0$ and the fluid converges to the point vortex system \cite{MaP93,MaP94,Mar98,CaM}. For the connection between the Euler flow and the point vortices, see also \cite {Gal,Mar88,Mar90,Ma98,MaP,Mar99,Mar07,MaP83,Tur}.
 
Suppose now that the initial datum \eqref{in}-\eqref{initial} is chosen in such a way that the corresponding Cauchy problem \eqref{Pointv} admits a global solution. For what stated before, the fluid converges to the point vortex system for any fixed positive time. On the other hand, in a realistic situation the parameter $\eps$ is not zero, hence a natural question is to characterize the larger time intervals on which this  approximation is valid for small but positive values of the parameter $\eps$. 

As time goes by, small filaments of fluid could move away. We fix $\beta\in (0,1/2)$ (we will see the technical reason for this limitation) and we denote by $T_{\eps,\beta}$ the first time in which a filament reaches the boundary  of $\bigcup_i \Sigma(z_i(t)|\eps^\beta)$. Clearly, $T_{\eps,\beta}$ gives a lower bound of the time horizon where the point vortex approximation is valid. In the general case, by adapting the strategies given \cite{MaP93,MaP94,Mar98,CaM}, we can show that $T_{\eps,\beta} \ge \const  |\log \eps |$ for small $\eps$. 

This bound is poor, and perhaps the result is too naive. Let us discuss this point. When $\Gamma=\bb R^2$ and there is a vortex alone, the center of the vorticity blob remains fixed and the spread of vorticity grows in time slowly, see \cite{Mar94,Mar96,LoN,Ser,ISG}, where it is shown that there is $c>0$ such that $T_{\eps,\beta} \ge \eps^{-c}$ for $\eps$ small, that is $T_{\eps,\beta}$ admits a power-law lower bound (on this point see Section \ref{sec:3}). The presence of the interaction with other blobs of vorticity and/or with the boundary of $\Gamma$ produces a priori a larger spread. A possible bound is $T_{\eps,\beta} \ge \const  |\log \eps |$, but we conjecture that it could be improved. In some particular cases, a power-law upper bound for $T_{\eps,\beta}$ can be obtained rigorously by a direct analysis of the problem. In more general cases, it could be obtained by making an average on time of the interaction with the other blobs. Indeed, when $\eps$ is small the fluid in a blob turns very quickly and so the effects of the other blobs depend on the time average of the interaction. This problem appears very challenging and it is not rigorously analyzed. To have some hints in this direction, we introduce a very schematic toy model and we investigate, by using a second order averaging method, the long time behavior of its solutions.

Another mechanism that should improve the convergence of the Euler flow to the point vortex model could be a very careful preparation of the initial data. Actually, in some textbooks in fluid mechanics (see for instance \cite {Bat}), the spread of a blob of vorticity due to the interaction with other blobs is neglected for symmetry reasons, assuming the initial data with a radial symmetry. Of course, the time evolution destroys this symmetry, but we can hope that until some time this property remains almost valid. We will show (at the end of Section 4) that the analysis of the aforementioned toy model allows us to strengthen this conjecture. 

We end this introduction observing that there are some sequences (in $\eps$) that imply better estimates on $T_{\eps,\beta}$. A trivial example is given by a vortex alone in the plane with a vorticity with radial symmetry: of course, this is a stationary state of  the Euler equations and so $T_{\eps,\beta}=\infty$. We could look for different and less trivial situations, with many blobs of vorticity that are stationary in time, but here we are interested in not so exceptional cases.

Actually, the relation between special dynamical systems (like the point vortex model) and the fluid mechanics is always related to some a priori assumptions. In the present case, we study situations with planar symmetry. In other cases, we assume other symmetries, often renouncing to study only compact blobs of vorticity. For instance, let us consider cylindrical symmetry without swirl: using cylindrical coordinates $(z,r,\theta)$, the motion does not depend on $\theta$ and it can be described in the $(z,r)$ plane, where a point represents a ring in the whole space. An approximated ring converges as $\eps \to 0$ to a ring that performs a rigid translation in the $z$-direction \cite{BCM00} ($\eps$ is the size of the tube of vorticity around the ring). In this case, we must renormalize the total vorticity by a factor $|\log \eps |^{-1}$. The same problem has been studied for $r \approx |\log \eps|$. With this choice, the tubes of vorticity are expected to converge to rings whose evolution is governed by a dynamical system similar but not equal to the point vortex system. This has been rigorously proved in the case of one ring alone in \cite{MaN99}, while the case of many rings remains an open problem.

For other examples of dynamical systems related to fluid mechanics, see for instance \cite {KMD,MB} and the references quoted in the recent paper \cite {CGC}. Their connection with the fluid physics, proved in some particular cases, is in general an open issue. 

We conclude with a final remark. Here, we discuss the fluid mechanics with planar symmetry, i.e., when a point vortex in $\bb R^2$ represents an infinite straight line in $\bb R^3$. We could also study the aforementioned case with cylindrical symmetry without swirl, in which the straight line becomes a circle of radius $r$, and consider the case of $N$ blobs of vorticity in the plane $(z,r)$ of size $\eps$ and centered around the points $(z_i,r_i)$. Let us make the change of variable $z=x, r=r_0+y$. We increase $r_0$ as $\eps$ decreases choosing  $r_0=\eps^{-b}, b>0$. It has been proved in \cite {Mar99} that in the limit $\eps \to 0$ the Euler flow converges to the point vortex system (\ref{Pointv}). Hence, we could apply also in this case our investigation.

The plan of the paper is the following. In the next section we discuss how to obtain, in general, the logarithmic  lower bound on $T_{\eps,\beta}$. In Section \ref{sec:3}, we give examples, in the whole plane and in a disk, where a power-law lower bound on $T_{\eps,\beta}$ holds true. In Section \ref{sec:4}, we introduce the toy model whose long-time behavior suggests similar features of the fluid dynamics.

\section{Persistence of vortices on logarithmic time scales}
\label{sec:2}

In this section we consider the general case of initial data of the form \eqref{in}, \eqref{initial}, with the only requirement that the associated Cauchy problem \eqref{Pointv} of the point vortex dynamics admits a global solution such that 
\begin{equation}
\label{rmin}
r_\mathrm{min} := \inf_{t\ge 0} \min_{i\ne j} \;\min\{|z_i(t)-z_j(t)|;\mathrm{dist}(z_i(t);\partial\Gamma)\} > 0\;.
\end{equation} 

To each initial data \eqref{in}, \eqref{initial} satisfying the above assumption and $\beta>0$ we associate the variable $T_{\eps,\beta}$ defined in the Introduction, i.e.,
\begin{equation}
\label{tbe}
T_{\eps,\beta} := \min_i \sup\{t>0 \colon |x(x_0,s)-z_i(s)| < \eps^\beta\; \;\forall\, s\in [0,t]\;\; \forall\, x_0\in \Lambda_{i,\eps}(0)\}\;.
\end{equation}
Our task is a lower bound on $T_{\eps,\beta}$. For simplicity, we analyze here the case $\Gamma =\bb R^2$, but the proof can be easily adapted to the case of a general domain. 

We recall that for vortices with intensities of the same sign, \eqref{rmin} is a well known property of the dynamics. In the general case, the existence of a unique global solution to the Cauchy problem \eqref{Pointv} is proved for any choice of initial data and intensities $\{z_i,a_i\}_{i=1}^N$, outside a set of Lebesgue measure zero \cite{MaP94}. This fact does not implies \eqref{rmin}, but it makes this assumption very reasonable.

\begin{theorem} 
\label{thm:1}
Let $\Gamma=\bb R^2$ and assume that the initial data of the Euler equations verify the above assumptions. Suppose also that there are $M,\nu>0$ such that
\begin{equation}
\label{bound}
|\omega_{i,\eps}(x,0)| \le M \eps^{-\nu}\;.
\end{equation}
Then, for each $\beta\in (0,1/2)$ there exist $\eps_0>0$ and $\zeta_0>0$ such that
\begin{equation}
\label{tbeb1}
T_{\eps,\beta} > \zeta_0 |\log \eps | \quad\forall\,\eps\in(0,\eps_0)\;. 
\end{equation}
\end{theorem}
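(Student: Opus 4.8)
The plan is to control the growth of the vorticity support by a Gronwall-type argument, tracking how fast a fluid particle $x(x_0,t)$ with $x_0\in\Lambda_{i,\eps}(0)$ can drift away from the corresponding point vortex trajectory $z_i(t)$. First I would fix $i$ and, for $t$ less than $T_{\eps,\beta}$, set $R_i(t):=\max\{|x(x_0,t)-z_i(t)| \colon x_0\in\Lambda_{i,\eps}(0)\}$, and more importantly consider a weighted moment of inertia such as $I_i(t):=\int\!\rmd x\,|x-z_i(t)|^2\,|\omega_{i,\eps}(x,t)|$ (or a higher-power variant, or $\int |\omega_{i,\eps}|\,\phi$ for a suitable cutoff $\phi$ supported near $z_i$), which is the standard device in \cite{MaP93,MaP94,Mar98,CaM} for this kind of estimate. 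Differentiating $I_i$ in time using the weak formulation \eqref{Weakeq} with $f(x,t)=|x-z_i(t)|^2$, the self-interaction of the $i$-th blob contributes nothing to the spreading at leading order (by the antisymmetry of the Biot--Savart kernel against the $|x-y|^2$ weight, the rotationally symmetric part produces no growth), so the right-hand side is driven only by (a) the interaction with the other blobs $j\ne i$ and (b) the error between $u(x,t)$ evaluated at $x$ near $z_i$ and the point-vortex velocity $\dot z_i(t)$.

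The key point is that as long as the configuration stays nondegenerate — which it does for $t<T_{\eps,\beta}$ with $\eps$ small, since $r_\mathrm{min}>0$ by \eqref{rmin} and the blobs have radius at most $\eps^\beta\ll r_\mathrm{min}$ — the kernel $K(z_i(t)-y)$ for $y$ in the $j$-th blob is Lipschitz with a constant controlled by $r_\mathrm{min}$, so the contribution of blob $j$ to the drift of blob $i$ differs from what the point vortex model \eqref{Pointv} already accounts for by a term of order $r_t(\eps)\le\eps^\beta$ times the total mass $|a_j|$. Collecting these, one obtains a differential inequality of the schematic form
\begin{equation}
\label{gronw}
\frac{\rmd}{\rmd t} R_i(t) \le C_1\, R_i(t) + C_2\, \eps^\beta + (\text{contribution of the singular self-term})\;,
\end{equation}
where $C_1,C_2$ depend only on $r_\mathrm{min}$, the intensities $a_j$, and $N$. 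The genuinely singular piece is the mutual interaction of two fluid particles inside the same blob $i$, where $|x-y|$ can be as small as we like; here one uses the bound \eqref{bound} on $\|\omega_{i,\eps}(\cdot,0)\|_\infty=\|\omega_{i,\eps}(\cdot,t)\|_\infty\le M\eps^{-\nu}$ together with the rearrangement/potential-theory estimate $\bigl|\int_{\Sigma(z_i(t)|R)} K(x-y)\,\omega_{i,\eps}(y,t)\,\rmd y\bigr|\le C\,\|\omega_{i,\eps}\|_\infty R$ — i.e. the velocity field generated by a bounded vorticity supported in a disk of radius $R$ is itself bounded by $\const\,\|\omega\|_\infty R$. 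Since the relevant radius is $R=R_i(t)\le\eps^\beta$, this self-term is bounded by $\const\,\eps^{-\nu}\eps^\beta$, which is \emph{not} small; this forces the restriction $\beta<1/2$ to be insufficient on its own, so one must instead close the argument by comparing $R_i(t)$ against the largest radius $r_t(\eps)$ that the blob could conceivably have reached, bootstrapping: as long as $R_i(s)\le\eps^\beta$ for all $s\le t$, \eqref{gronw} gives $R_i(t)\le(\eps+C_2\eps^\beta t)\rme^{C_1 t}$, and one checks that $R_i(t)$ stays below $\eps^\beta$ as long as $C_1 t\le\zeta_0|\log\eps|$ for a suitably small $\zeta_0$, because $\rme^{C_1 t}\le\eps^{-C_1\zeta_0}$ and the product $\eps^\beta\cdot\eps^{-C_1\zeta_0}$ is still $\le\frac12\eps^\beta$ when $C_1\zeta_0<\beta$ — actually one needs the slightly finer observation that the growth factor multiplies $\eps$ (from the initial radius) and $\eps^\beta\cdot t$, so the genuine constraint is $\eps^{1-C_1\zeta_0}\ll\eps^\beta$, giving any $\zeta_0<(1-\beta)/C_1$, wait — the $\beta<1/2$ enters precisely when one tracks the self-term more carefully via the diffusion-of-support mechanism of \cite{Mar94,Mar96}, where the fraction of vorticity mass that has escaped a disk of radius $\eps^\beta$ is shown to be superpolynomially small, so that the effective spreading rate is governed not by $\eps^{-\nu}\eps^\beta$ but by a much smaller quantity.

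The main obstacle, therefore, is handling the self-interaction term correctly: the naive bound $\|u_{i,\eps}\|_\infty\le\const\,\eps^{-\nu+\beta}$ is too crude when $\nu$ is large, and one has to exploit the fact that the bulk of the $i$-th blob stays within a disk of radius $O(\eps)$ and only an exponentially (in a power of $1/\eps$) small amount of mass leaks out to radius $\eps^\beta$. This is exactly the technique developed in \cite{MaP93,MaP94,Mar98,CaM}: one introduces an auxiliary quantity measuring the mass outside $\Sigma(z_i(t)|\rho)$ for intermediate radii $\eps<\rho<\eps^\beta$, shows it satisfies a closed differential inequality with a large negative (confining) coefficient coming from the fast rotation of the blob, integrates to get the superpolynomial smallness, and feeds this back into \eqref{gronw}. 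The condition $\beta<1/2$ is what guarantees the annular region between radius $\eps$ and radius $\eps^\beta$ is wide enough on the relevant scale for this confinement estimate to beat the $\eps^{-\nu}$ loss, and the final bound $T_{\eps,\beta}>\zeta_0|\log\eps|$ emerges because the accumulated error from the $N-1$ other blobs, each of size $O(\eps^\beta)$ per unit time, reaches the threshold $\eps^\beta$ only after a time of order $|\log\eps|$ once multiplied through the exponential Gronwall factor. The remaining steps — choosing $\eps_0$ so that $\eps_0^\beta<r_\mathrm{min}/4$ and all the above inequalities hold, and verifying the continuity in $t$ that legitimizes the bootstrap — are routine.
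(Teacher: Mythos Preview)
Your outline follows the paper's strategy: a moment-of-inertia estimate, a differential inequality for the maximal radius of the support, and a bootstrap on $[0,\alpha|\log\eps|]$. The paper organizes this by first treating a single blob in an external bounded Lipschitz divergence-free field (the ``reduced system'', Theorem~\ref{thm:2}) and then gluing the blobs together; it also centers everything at the center of vorticity $B_{i,\eps}(t)$ rather than at $z_i(t)$, which makes the self-interaction contribution to $\dot I_i$ vanish exactly and relegates the comparison with $z_i(t)$ to a separate Gronwall bound on $|B_{i,\eps}(t)-z_i(t)|$ (Lemma~\ref{lem:1}). These are organizational differences; your version would work with minor extra bookkeeping.

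There is, however, one genuine misconception. You describe the superpolynomial smallness of the escaped mass $m_t(\rho)$ as coming from ``a closed differential inequality with a large negative (confining) coefficient coming from the fast rotation of the blob''. That is not the mechanism, and if you tried to implement it literally you would not find any such coefficient. In the paper (Lemma~\ref{lem:3}) one proves, for a mollified version $\mu_t$ of $m_t$, the inequality $\frac{\rmd}{\rmd t}\mu_t(h)\le A_*\,\mu_t(h/2)$ with a \emph{positive} constant $A_*$, uniformly for $h\ge\eps^{\beta_*}$. This uniformity is exactly where $\beta_*<1/2$ enters: the coefficient contains a term $I_\eps(t)/h^4\lesssim\eps^{\delta-4\beta_*}$ with $\delta$ close to $2$, which stays bounded only if $4\beta_*<\delta$. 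One then iterates the integral inequality $n\sim(\beta_*-\beta)|\log_2\eps|$ times, uses that $\mu_0(2^{-j}\eps^\beta)=0$ for all $j\le n$ (the initial support has radius $\eps$), and obtains $\mu_t(\eps^\beta)\le(A_*t)^{n+1}/(n+1)!$. Stirling's formula then converts the factorial into decay faster than any power of $\eps$, provided $t\le\alpha|\log\eps|$ with $\alpha$ small enough. The fast rotation of the blob plays no role; the confinement is driven solely by the smallness of the moment of inertia feeding this iterated Volterra-type bound. Your sketch correctly identifies \emph{what} must be proved (superpolynomial smallness of $m_t(\eps^\beta)$, to kill the $\sqrt{M\eps^{-\nu}m_t}$ term in the radial-velocity estimate) but misidentifies \emph{why} it holds.
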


We split the proof into two steps. First, in the next subsection, we prove an analogous result for a reduced system: the motion of a single blob of vorticity in an external time-dependent divergence-free vector field. The original problem is then solved by using the reduced system to simulate the force acting on a given blob of vorticity due to its interaction with the other blobs.

\subsection{The reduced system}
\label{sec:2.1}
We consider a single blob of vorticity which evolves in $\bb R^2$ in presence of an external time-dependent divergence-free vector field $F(x,t)$. This means that (i) the initial configuration $\omega_\eps(x,0)$ is a function of definite sign such that $\Lambda_\eps(0):=\supp\omega_\eps(\cdot,0)\subset \Sigma(z_*|\eps)$ for some $z_*\in\bb R^2$ and (ii) the evolved configuration $\omega(x,t) = \omega_\eps(x,t)$ satisfies \eqref{Cons1}, and with in this case $x(x_0,t)$ solution to
\begin{equation}
\label{Cons4}
\frac{\rmd}{\rmd t}x(x_0,t) = u(x(x_0,t),t)+F(x(x_0,t),t)\;, \quad x(x_0,0)=x_0\;,
\end{equation}
where $u(x,t) = \int\! \rmd y\, K(x-y)\,\omega_\eps(y,t)$ with $K$ as in \eqref{vel-vor3}. As a consequence, the weak formulation \eqref{Weakeq} is replaced by  
\begin{equation}
\label{WeakeqF}
\frac{\rmd}{\rmd t}\omega_t[f] = \omega_t[(u+F) \cdot \nabla f]+\omega_t[\partial_t f]\;.
\end{equation}

Since the auxiliary field $F(x,t)$ will be used to simulate the action of the other blobs of vorticity, we can assume that it is bounded and, with respect to the spatial variable, divergence-free and Lipschitz, 
\begin{equation}
\label{2Lipsc}
\|F\|_\infty < +\infty\;, \quad |F(x,t)-F(y,t)| \le  D_t |x-y|\;, \quad D:=\sup_{t\in [0,+\infty)} D_t < +\infty\;. 
\end{equation}

The point vortex dynamics associated to the reduced system is defined by the planar motion $B(t)$,  solution to the following equation,
\begin{equation}
\label{dis}
\dot B(t) = F(B(t),t)\;, \quad  B(0)=z_*\;.
\end{equation} 
Without loss of generality, we also assume that initially, and hence at any time, the blob has intensity one, 
\begin{equation}
\label{w=1}
\omega_\eps(x,t)\ge 0\;, \quad \int\!\rmd x\, \omega_\eps(x,t) = 1\;.
\end{equation}

For this reduced system we prove the following result.
\begin{theorem} 
\label{thm:2}
Let $\Lambda_\eps(t):=\supp\omega_\eps(\cdot,t)$, suppose there are $M,\nu>0$ such that
\begin{equation}
\label{bound1}
\omega_\eps(x,0) \le M \eps^{-\nu}\;,
\end{equation}
and define
\begin{equation}
\label{tbe*}
T_{\eps,\beta}^* := \sup\{t>0 \colon \Lambda_\eps(s) \subset \Sigma(B(s)|\eps^\beta) \;\; \forall\, s\in [0,t]\}\;.
\end{equation}
Then, for each $\beta\in (0,1/2)$ there exist $\eps_1>0$ and $\zeta_1>0$ such that
\begin{equation}
\label{tbeb2}
T_{\eps,\beta}^* > \zeta_1 |\log \eps | \quad\forall\,\eps\in(0,\eps_1)\;. 
\end{equation}
\end{theorem}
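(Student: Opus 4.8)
The plan is a Gronwall estimate run in tandem with a continuity argument, organized around two quantities of the reduced system: the center of vorticity $B_\eps(t):=\int\! x\,\omega_\eps(x,t)\,\rmd x$ and the moment of inertia $I_\eps(t):=\int\!|x-B_\eps(t)|^2\,\omega_\eps(x,t)\,\rmd x$. The structural fact I would rely on is that the self-interaction disappears from the evolution of both: since $K(-z)=-K(z)$, antisymmetrization gives $\int\!\!\int K(x-y)\,\omega_\eps(x,t)\omega_\eps(y,t)\,\rmd x\,\rmd y=0$, so from the weak formulation \eqref{WeakeqF} one obtains $\dot B_\eps(t)=\int\! F(x,t)\,\omega_\eps(x,t)\,\rmd x$; and since $z\cdot K(z)=0$, the same antisymmetrization kills the self-part of $\dot I_\eps$, so, using $\int\!(x-B_\eps(t))\,\omega_\eps(x,t)\,\rmd x=0$ to replace $F(x,t)$ by $F(x,t)-F(B_\eps(t),t)$ and then \eqref{2Lipsc},
\[
\dot I_\eps(t)=2\!\int\!\big(x-B_\eps(t)\big)\cdot\big(F(x,t)-F(B_\eps(t),t)\big)\,\omega_\eps(x,t)\,\rmd x\le 2D\,I_\eps(t)\,.
\]
Since $\Lambda_\eps(0)\subset\Sigma(z_*|\eps)$ forces $I_\eps(0)\le 4\eps^2$, Gronwall yields the unconditional bound $I_\eps(t)\le 4\eps^2\rme^{2Dt}$. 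The same use of $\int(x-B_\eps)\,\omega_\eps=0$ gives $|\dot B_\eps(t)-\dot B(t)|\le D\,|B_\eps(t)-B(t)|+D\,\rho(t)$, where $\rho(t):=\sup\{|x(x_0,t)-B(t)|:x_0\in\Lambda_\eps(0)\}$, so $B_\eps$ tracks the point-vortex trajectory $B$ as long as the blob stays concentrated.

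For the core step, let $T^{**}$ be the supremum of the times $t$ for which $\Lambda_\eps(s)\subset\Sigma(B(s)|\eps^\beta)$ on all of $[0,t]$; by continuity $T^{**}>0$, and since $T^*_{\eps,\beta}\ge T^{**}$ it suffices to show $T^{**}>\zeta_1|\log\eps|$. Suppose not, and work on $[0,T^{**}]$, where the blob is concentrated. Put $\ell(t):=\sup\{|x(x_0,t)-B_\eps(t)|:x_0\in\Lambda_\eps(0)\}$; for an $x^*=x(x_0,t)$ attaining the supremum one has, a.e.\ in $t$,
\[
\ell(t)\,\dot\ell(t)=\big(x^*-B_\eps(t)\big)\cdot u(x^*,t)+\big(x^*-B_\eps(t)\big)\cdot\big(F(x^*,t)-\dot B_\eps(t)\big)\,,
\]
the field term being $\le D\,\ell(t)^2+D\,\ell(t)\sqrt{I_\eps(t)}$. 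For the self-induced term I would use the identity $\big(x^*-B_\eps\big)\cdot K(x^*-y)=\tfrac1{2\pi}\big(x^*-B_\eps\big)^\perp\!\cdot(y-B_\eps)\,|x^*-y|^{-2}$ together with $\int(y-B_\eps)\,\omega_\eps\,\rmd y=0$ to subtract off the nearly constant value $|x^*-B_\eps|^{-2}$ of the kernel: over the portion of the mass lying within $\tfrac12\ell(t)$ of $B_\eps$ this produces an extra factor $|y-B_\eps|^2$ and hence a contribution $\le c\,I_\eps(t)/\ell(t)^2$, whereas the residual mass — of total weight $\le 4I_\eps(t)/\ell(t)^2$ but density only $\le M\eps^{-\nu}$ by \eqref{bound1} — is handled by the standard velocity bound $|u(x^*,t)|\le c\,(M\eps^{-\nu})^{1/2}$ localized to that small mass, giving a factor $\le c\,(M\eps^{-\nu}I_\eps(t)/\ell(t)^2)^{1/2}$. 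In the concentrated regime these combine to $\big(x^*-B_\eps\big)\cdot u(x^*,t)\le c\,I_\eps(t)/\ell(t)^2$ up to harmless powers of $\eps$, and, since $\ell(t)^2\ge I_\eps(t)$,
\[
\frac{\rmd}{\rmd t}\ell(t)^4\le c\big(I_\eps(t)+\ell(t)^4\big)+(\text{l.o.t.})\le c\big(\eps^2\rme^{2Dt}+\ell(t)^4\big)+(\text{l.o.t.})\,,
\]
whence $\ell(t)\le c_1\,\eps^{1/2}\rme^{c_2 t}$ on $[0,T^{**}]$; the analogous Gronwall bound for $|B_\eps(t)-B(t)|$, driven by $\rho\le\ell+|B_\eps-B|$, is of the same type.

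To finish, choose $\zeta_1>0$ so small that $\tfrac12-c_2\zeta_1>\beta$; this interval is nonempty precisely because $\beta<1/2$ (and smallness of $\zeta_1$ also makes the ``concentrated regime'' estimates above legitimate). Then on $[0,T^{**}]\subset[0,\zeta_1|\log\eps|]$ one gets $\rho(t)\le\ell(t)+|B_\eps(t)-B(t)|\le c\,\eps^{1/2-c_2\zeta_1}$, hence $\rho(T^{**})<\eps^\beta$ strictly once $\eps<\eps_1$ for a suitable $\eps_1$; i.e.\ $\Lambda_\eps(T^{**})$ lies in the open disk $\Sigma(B(T^{**})|\eps^\beta)$ with room to spare, contradicting the definition of $T^{**}$. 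Therefore $T^{**}>\zeta_1|\log\eps|$, which is \eqref{tbeb2}.

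The step I expect to be the real obstacle is the bound on the self-induced radial drift $\big(x^*-B_\eps\big)\cdot u(x^*,t)$ for a boundary particle of a support that may have developed thin filaments. The crude estimate $|u(x^*,t)|\le c\,(M\eps^{-\nu})^{1/2}$ alone only yields $\ell(t)$ growing like $\eps^{-\nu/2}t$, which is useless on any time scale going to infinity; one must genuinely extract the cancellation encoded in $z\cdot K(z)=0$ and $\int(y-B_\eps)\,\omega_\eps=0$ — morally, that the self-velocity of a nearly radially symmetric blob is nearly tangential — while keeping the small-mass but high-density ``escaped'' vorticity under control through the $L^\infty$ hypothesis, in the spirit of \cite{MaP93,MaP94,Mar98,CaM}. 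This is also exactly where $\beta<1/2$ is used: the $\ell^4$-type inequality turns $I_\eps(t)\sim\eps^2$ into $\ell(t)\sim\eps^{1/2}$ up to a slowly growing exponential, and only a threshold $\eps^\beta$ with $\beta<1/2$ leaves the margin to absorb that exponential over a time of order $|\log\eps|$.
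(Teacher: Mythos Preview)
Your architecture matches the paper's: Gronwall on $I_\eps$, tracking $B_\eps$ against $B$, a radial-velocity estimate for the extremal particle obtained by subtracting the mean via $\int(y-B_\eps)\,\omega_\eps=0$, and a continuity argument to close on $[0,\zeta_1|\log\eps|]$. The inner-mass estimate $(x^*-B_\eps)\cdot u_{\mathrm{inner}}\le cI_\eps/\ell^2$ is exactly the paper's Lemma~\ref{lem:2} bound $|H_1|\le 5I_\eps/(\pi R_t^3)$ multiplied by $\ell$.

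The genuine gap is the outer-mass (``$H_2$'') term. You bound the escaped mass by Chebyshev, $m_t(\ell/2)\le 4I_\eps/\ell^2$, and then invoke the rearrangement bound to get a radial-velocity contribution $\le c\big(M\eps^{-\nu}I_\eps/\ell^2\big)^{1/2}$. Multiplying by $\ell$ this contributes $c\,(M\eps^{-\nu}I_\eps)^{1/2}$ to $\ell\dot\ell$, hence $c\,\ell^2(M\eps^{-\nu}I_\eps)^{1/2}$ to $\tfrac{\rmd}{\rmd t}\ell^4$. Absorbing via AM--GM leaves a forcing term $cM\eps^{-\nu}I_\eps\sim\eps^{2-\nu}\rme^{2Dt}$, so Gronwall only yields $\ell(t)\le c\,\eps^{(2-\nu)/4}\rme^{ct}$. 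For $\nu\ge 2$ this is vacuous, and even for $0<\nu<2$ it gives $\beta<(2-\nu)/4$, not every $\beta<1/2$. The factor $\eps^{-\nu}$ is therefore \emph{not} a ``harmless power of $\eps$'', and this is precisely where your $\ell^4$ inequality breaks down.

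What the paper does (Lemma~\ref{lem:3}) is to upgrade the Chebyshev bound to super-polynomial decay: for each $\beta<1/2$ and each $\ell>0$ one has $\eps^{-\ell}m_t(\eps^\beta)\to 0$ uniformly on $[0,\alpha|\log\eps|]$. This is obtained by a dyadic bootstrap on a mollified mass $\mu_t(h)$: the radial-velocity estimate itself yields $\tfrac{\rmd}{\rmd t}\mu_t(h)\le A_*\,\mu_t(h/2)$ with $A_*$ independent of $\eps$ once $h\ge\eps^{\beta_*}$, which iterates $n\sim(\beta_*-\beta)|\log_2\eps|$ times to give $\mu_t(\eps^\beta)\le (A_*\alpha|\log\eps|)^{n+1}/(n+1)!$. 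Stirling then beats any fixed $\eps^{-\nu}$, and the $H_2$ term becomes genuinely lower order. Without this step the proof does not close for the full range of $\nu$ allowed in \eqref{bound1}; your own closing paragraph correctly flags this as the obstacle, but the mechanism you propose (Chebyshev plus the $L^\infty$ bound) is not strong enough to overcome it.
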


The proof is similar to that in \cite{CaM} and it is based on a bootstrap argument. For later purposes, it is useful to separate the principal estimates in different lemmas, giving the proof of the theorem at the end of the subsection. 

We denote by $B_\eps(t)$ the center of vorticity of the blob, defined by
\begin{equation}
\label{c.m.}
B_\eps(t) = \int\! \rmd x\, x\, \omega_\eps(x,t)\;, 
\end{equation}
and by $I_\eps(t)$ the moment of inertia with respect of $B_\eps$, i.e.,
\begin{equation}
\label{moment}
I_\eps(t) = \int\! \rmd x\, |x-B_\eps(t)|^2  \omega_\eps(x,t)\;.
\end{equation}

\begin{lemma}
\label{lem:1}
For any $t\ge 0$, the following estimates hold, 
\begin{equation}
\label{Iee}
I_\eps(t) \le 4\eps^2 \exp\bigg[2\int_0^t\!\rmd s\, D_s\bigg]\;,
\end{equation}
\begin{equation}
\label{bee}
|B_\eps(t)-B(t)| \le 2\eps \bigg(1 + \int_0^t\!\rmd s\, D_s\bigg) \exp\bigg[\int_0^t\!\rmd s\, D_s\bigg]\;,
\end{equation}
where $D_t$ is the Lipschitz constant introduced in \eqref{2Lipsc}.
\end{lemma}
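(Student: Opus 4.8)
The plan is to estimate directly the time derivatives of $I_\eps(t)$ and of $|B_\eps(t)-B(t)|$ using the weak formulation \eqref{WeakeqF}, and then close the resulting differential inequalities by a Gronwall argument. A crucial structural fact, going back to the work on a single blob, is that the self-interaction term coming from $u$ does not contribute to the growth of the moment of inertia: by the antisymmetry $K(x-y)=-K(y-x)$, the quadratic term $\int\!\rmd x\,\rmd y\,\omega_\eps(x,t)\omega_\eps(y,t)\,(x-B_\eps(t))\cdot K(x-y)$ vanishes after symmetrization in $x\leftrightarrow y$ (one also uses that $\dot B_\eps$ is exactly the average of $u+F$, so the $B_\eps$ piece drops out). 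Hence only the external field $F$ drives $\dot I_\eps$.

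First I would compute, choosing in \eqref{WeakeqF} the test function $f(x,t)=|x-B_\eps(t)|^2$ (legitimate since $\omega_\eps$ has compact support), that
\begin{equation}
\label{dotIe}
\dot I_\eps(t) = 2\int\!\rmd x\,(x-B_\eps(t))\cdot F(x,t)\,\omega_\eps(x,t)\;,
\end{equation}
the $u$-term having disappeared by the symmetrization just described and the $\partial_t f$-term by the defining property of $B_\eps$. Writing $F(x,t)=F(B_\eps(t),t)+[F(x,t)-F(B_\eps(t),t)]$, the first summand integrates against $(x-B_\eps(t))\omega_\eps$ to zero, and the second is bounded in absolute value by $D_t|x-B_\eps(t)|^2$, using \eqref{2Lipsc} and \eqref{w=1}. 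This yields $\dot I_\eps(t)\le 2D_t I_\eps(t)$, and since $I_\eps(0)\le\int_{\Sigma(z_*|\eps)}\!\rmd x\,|x-B_\eps(0)|^2\omega_\eps(x,0)\le(2\eps)^2$ (the blob sits in a disk of radius $\eps$, so every point is within $2\eps$ of $B_\eps(0)$, in fact within $\eps$, but $4\eps^2$ is the clean bound), Gronwall gives \eqref{Iee}.

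For \eqref{bee} I would estimate $\dot B_\eps(t)-\dot B(t)$. One has $\dot B_\eps(t)=\int\!\rmd x\,(u(x,t)+F(x,t))\,\omega_\eps(x,t)$; again $\int\!\rmd x\,u(x,t)\omega_\eps(x,t)=0$ by the antisymmetry of $K$, so $\dot B_\eps(t)=\int\!\rmd x\,F(x,t)\,\omega_\eps(x,t)$, while $\dot B(t)=F(B(t),t)$. Hence
\begin{equation}
\label{dotbe}
|\dot B_\eps(t)-\dot B(t)|\le\int\!\rmd x\,|F(x,t)-F(B(t),t)|\,\omega_\eps(x,t)\le D_t\int\!\rmd x\,|x-B(t)|\,\omega_\eps(x,t)\;,
\end{equation}
and by the triangle inequality and Cauchy--Schwarz (with \eqref{w=1}) the last integral is $\le|B_\eps(t)-B(t)|+\int\!\rmd x\,|x-B_\eps(t)|\,\omega_\eps(x,t)\le|B_\eps(t)-B(t)|+\sqrt{I_\eps(t)}$. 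Plugging in \eqref{Iee} and setting $g(t):=|B_\eps(t)-B(t)|$, $g(0)=0$, we get $\dot g(t)\le D_t g(t)+2\eps D_t\exp[\int_0^t\!\rmd s\,D_s]$; integrating this linear inequality (multiply by the integrating factor $\exp[-\int_0^tD_s\rmd s]$) produces exactly the right-hand side of \eqref{bee}.

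I do not expect a serious obstacle here; the one point requiring care is the rigorous justification of \eqref{dotIe} and of the antisymmetry cancellations — in particular that the double integral defining the $u$-contribution is absolutely convergent so that Fubini and the $x\leftrightarrow y$ symmetrization are legitimate. This is where the hypothesis \eqref{bound1} and the compact support of $\omega_\eps(\cdot,t)$ (which is transported, hence stays compact) enter: the kernel $K$ has an $L^1_{\mathrm{loc}}$ singularity $\sim|x-y|^{-1}$, integrable against the bounded, compactly supported density, so all manipulations are justified exactly as in \cite{CaM,MaP94}. The bound \eqref{bound1} itself is not used quantitatively in this lemma — it will be needed later, for the $L^\infty$ bookkeeping in the bootstrap controlling $T^*_{\eps,\beta}$ — but it guarantees we are in the well-posedness class of \cite{MaP94} throughout.
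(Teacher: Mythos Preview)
Your argument is essentially the same as the paper's: compute $\dot I_\eps$ and $\dot B_\eps$ via the weak formulation, use the antisymmetry of $K$ to kill the self-interaction, then close with Gronwall. The only slip is your claim that $g(0)=|B_\eps(0)-B(0)|=0$: since $B(0)=z_*$ but $B_\eps(0)$ is merely the center of mass of a distribution supported in $\Sigma(z_*|\eps)$, one only has $g(0)\le 2\eps$ (in fact $\le\eps$), and it is precisely this initial contribution that produces the ``$1+$'' in \eqref{bee} after integrating the linear inequality --- with $g(0)=0$ you would obtain only $2\eps\big(\int_0^t D_s\,\rmd s\big)\exp[\int_0^t D_s\,\rmd s]$, not the stated bound.
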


\begin{proof}
From \eqref{Cons1}, \eqref{Cons4}, and since $u+F$ is divergence-free we have,
\begin{equation*}
B_\eps(t) = \int\! \rmd x_0\, x(x_0,t) \, \omega_\eps(x_0,0)\;, \quad  I_\eps(t) = \int\! \rmd x_0\, |x(x_0,t)-B_\eps(t)|^2  \omega_\eps(x_0,0)\;.
\end{equation*}
Therefore, by \eqref{Cons4} and using the identities 
\begin{equation}
\label{id-K}
\int\!\rmd x\, u(x,t)\,\omega_\eps(x,t) = 0\;,\quad \int\!\rmd x\, x\cdot u(x,t)\,\omega_\eps(x,t) = 0,
\end{equation}
the time derivatives of $B_\eps(t)$ and $I_\eps(t)$ are easily computed, 
\begin{equation}
\label{growth B}
\dot B_\eps(t) =  \int\! \rmd x\, F(x,t) \,\omega_\eps(x,t)\;,
\end{equation}
\begin{equation}
\label{growth moment}
\dot I_\eps(t) = 2 \int\! \rmd x\, (x-B_\eps(t))\cdot F(x,t)\, \omega_\eps(x,t)\;.
\end{equation}

By \eqref{2Lipsc} and the obvious identity,
\begin{equation*}
\int\! \rmd x\, (x-B_\eps(t)) \cdot F(B_\eps(t),t) \, \omega_\eps(x,t) = 0\;,
\end{equation*}
we have, 
\begin{equation*}
|\dot I_\eps(t)| \le 2 D_t  \int\! \rmd x\,  |x-B_\eps(t)| ^2 \, \omega_\eps(x,t) = 2 D_t I_\eps(t)\;,
\end{equation*}
which can be integrated, giving
\begin{equation*}
I_\eps(t) \le I_\eps(0) \exp\bigg[2\int_0^t\!\rmd s\, D_s\bigg]\;,
\end{equation*}
which implies \eqref{Iee} because of the (not optimal) estimate $I_\eps(0) \le 4\eps^2$, following immediately from the fact that $\Lambda_\eps(0)\subset\Sigma(z_*|\eps)$ and in view of \eqref{w=1}.

To prove \eqref{bee}, we observe that, by \eqref{dis}, \eqref{growth B}, and \eqref{w=1},
\begin{equation*}
\dot B_\eps(t) - \dot B(t) =  F(B_\eps(t),t) - F(B(t),t) + \int\!\rmd x\, [F(x,t) - F(B_\eps(t),t)] \, \omega_\eps(x,t) \;.
\end{equation*}
Therefore, by \eqref{2Lipsc},
\begin{equation*}
\begin{split}
|\dot B_\eps(t) - \dot B(t)| & \le D_t |B_\eps(t)-B(t)| + D_t \int\!\rmd x\,  |B_\eps(t)- x |\,\omega_\eps(x,t) \\ & \le  D_t |B_\eps(t)-B(t)| +  D_t \sqrt{I_\eps(t)}\;,
\end{split}
\end{equation*}
where in the last estimate we used the Cauchy-Schwarz inequality and \eqref{w=1}. The last differential inequality can be integrated, getting
\begin{equation*}
|B_\eps(t)-B(t)|  \le |B_\eps(0)-z_*| \exp\bigg[\int_0^t\!\rmd s\, D_s\bigg] + \int_0^t\!\rmd s\, D_s\sqrt{I_\eps(s)}\, \exp\bigg[\int_s^t\!\rmd \tau\, D_\tau\bigg] \;,
\end{equation*}
which implies \eqref{bee} in view of \eqref{Iee} and since $|B_\eps(0)-z_*|\le 2\eps$.
\end{proof}

\begin{remark}
\label{rem:2.1}
In the proofs of Theorems \ref{thm:1} and \ref{thm:2}, the estimates of Lemmas \ref{lem:1} and \ref{lem:2} will be used with $D$ as in \eqref{2Lipsc} instead of $D_t$. Nevertheless, we keep the formulation involving the integral of $D_t$ because this will be used later in the proof of Theorem \ref{thm:3}. We also remark that the identities \eqref{id-K} follow from the antisymmetry of $K=K_{\bb R^2}$. We observe that these are no longer true for a general domain $\Gamma$. On the other hand, in this case $K_\Gamma = K+\tilde K$ with $\tilde K$ a kernel which is regular away from the boundary, so that its contribution can be treated as an external (bounded, divergence-free, and Lipschitz) field, see also the Remark \ref{rem:2.2} at the end of this section.  
\end{remark}

Now, we introduce a positive parameter $\alpha$, to be chosen small enough, and study the system for times $0 \le t \le \alpha |\log \eps |$. Recalling the definition of $D$ in \eqref{2Lipsc}, by \eqref{Iee} and \eqref{bee} we have,
\begin{eqnarray}
\label{2bound moment}
& I_\eps(t)  \le \ 4 \eps^\delta \quad \forall\, t \in [0, \alpha |\log \eps |]\;, \\ \label{2concl2} & |B_\eps(t)-B(t)|  \le 2(1+D\alpha |\log\eps|) \eps^{\delta/2} \quad \forall\, t \in [0, \alpha |\log \eps |]\;,
\end{eqnarray}
with $\delta=2-2D\alpha>0$, provided $\alpha$ is small enough. 

The bound \eqref{2bound moment} implies that for small $\eps$ the main part of the vorticity remains concentrated around $B_\eps(t)$, which, in turn, remains close to $B(t)$ in view of \eqref{2concl2}. We now prove that not only the main part but all the filaments of vorticity remain close to $B_\eps(t)$. 

To this purpose, we study the growth in time of the distance from $B_\eps(t)$ of a fluid particle. The key point is to show that this growth is very small for the particles sufficiently away from the center of vorticity. This is a consequence of the following two lemmas. 

\begin{lemma}
\label{lem:2}
Recall $\Lambda_\eps(t)=\supp\omega_\eps(\cdot,t)$ and define
\begin{equation}
\label{Rt}
R_t:= \max\{|x-B_\eps(t)|\colon x\in \Lambda_\eps(t)\}\;.
\end{equation}
Given $x_0\in\Lambda_\eps(0)$, suppose at time $t>0$ it happens that  
\begin{equation}
\label{hstimv}
|x(x_0,t)-B_\eps(t)| = R_t\;.
\end{equation}
Then, at this time $t$, 
\begin{equation}
\label{stimv}
\frac{\rmd}{\rmd t} |x(x_0,t)- B_\eps(t)| \le 2D_tR_t + \frac{5I_\eps(t)}{\pi R_t^3} + \sqrt{\frac{M\eps^{-\nu}\,m_t(R_t/2)}{\pi}}\;,
\end{equation}
with $M,\nu$ as in \eqref{bound1} and the function $m_t(\cdot)$ on $\bb R_+$ defined by 
\begin{equation}
\label{mt}
m_t(h) = \int_{|y-B_\eps(t)|>h}\!\rmd y\;\omega_\eps(y,t)\;.
\end{equation}
\end{lemma}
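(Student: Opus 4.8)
The plan is to differentiate $t\mapsto|x(x_0,t)-B_\eps(t)|$ directly at the instant $t$ for which \eqref{hstimv} holds. Since then $|x(x_0,t)-B_\eps(t)|=R_t>0$, the map is differentiable there, and writing $x=x(x_0,t)$ and $\hat n=(x-B_\eps(t))/R_t$, from \eqref{Cons4} and \eqref{growth B} one gets
\[
\frac{\rmd}{\rmd t}|x(x_0,t)-B_\eps(t)| = \hat n\cdot\Bigl(u(x,t)+F(x,t)-\int\!\rmd y\,F(y,t)\,\omega_\eps(y,t)\Bigr),
\]
with $u(x,t)=\int K(x-y)\omega_\eps(y,t)\,\rmd y$. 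I would then estimate the external-field contribution and the self-induced contribution separately.

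For the external field, using $\int\omega_\eps(\cdot,t)=1$ one has $F(x,t)-\int F(y,t)\omega_\eps(y,t)\,\rmd y=\int[F(x,t)-F(y,t)]\,\omega_\eps(y,t)\,\rmd y$, so by the Lipschitz bound in \eqref{2Lipsc} and the elementary inequality $|x-y|\le|x-B_\eps(t)|+|B_\eps(t)-y|\le 2R_t$ valid for every $y\in\Lambda_\eps(t)$ (by definition of $R_t$), this term is bounded in modulus by $2D_tR_t$, the first term in \eqref{stimv}.

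For the self-induced velocity the key algebraic remark is that $K(x-y)\perp(x-y)$, whence $(x-B_\eps(t))\cdot K(x-y)=(y-B_\eps(t))\cdot K(x-y)$ and therefore $\hat n\cdot u(x,t)=R_t^{-1}\int(y-B_\eps(t))\cdot K(x-y)\,\omega_\eps(y,t)\,\rmd y$. I split this integral according to whether $|y-B_\eps(t)|\le R_t/2$ or $|y-B_\eps(t)|>R_t/2$. On the far region the total vorticity mass is exactly $m_t(R_t/2)$ (by \eqref{mt}), $|y-B_\eps(t)|\le R_t$ on $\Lambda_\eps(t)$, and crucially $\|\omega_\eps(\cdot,t)\|_\infty=\|\omega_\eps(\cdot,0)\|_\infty\le M\eps^{-\nu}$ since the vorticity is transported by the measure-preserving flow; bounding $|(y-B_\eps(t))\cdot K(x-y)|\le R_t(2\pi|x-y|)^{-1}$ and invoking the standard rearrangement estimate $\int g(y)(2\pi|x-y|)^{-1}\,\rmd y\le(\|g\|_\infty\|g\|_1/\pi)^{1/2}$ (the worst case being $g$ equal to its supremum on a disk centred at $x$) produces exactly the last term $\sqrt{M\eps^{-\nu}m_t(R_t/2)/\pi}$. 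On the near region $|x-y|\ge R_t/2$; here I subtract the reference value $K(x-B_\eps(t))$, exploiting that $\int_{|y-B_\eps(t)|\le R_t/2}(y-B_\eps(t))\,\omega_\eps(y,t)\,\rmd y=-\int_{|y-B_\eps(t)|>R_t/2}(y-B_\eps(t))\,\omega_\eps(y,t)\,\rmd y$ has modulus at most $2I_\eps(t)/R_t$ (using $|y-B_\eps(t)|\le 2|y-B_\eps(t)|^2/R_t$ on the far set and \eqref{moment}), together with the mean-value bound $|K(x-y)-K(x-B_\eps(t))|\le(2\pi)^{-1}(R_t/2)^{-2}|y-B_\eps(t)|$, which follows from $\|\mathrm{D}K(z)\|=(2\pi|z|^2)^{-1}$ and the fact that the segment joining $x-y$ to $x-B_\eps(t)$ stays at distance $\ge R_t/2$ from the origin. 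Collecting the two near-region pieces and using $\int_{|y-B_\eps(t)|\le R_t/2}|y-B_\eps(t)|^2\omega_\eps(y,t)\,\rmd y\le I_\eps(t)$ gives a bound of the form $\const\,I_\eps(t)/R_t^3$, which after bookkeeping of the constants becomes $5I_\eps(t)/(\pi R_t^3)$. Summing the three contributions yields \eqref{stimv}.

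The only genuinely delicate step is the near-region estimate: a naive bound gives only $O(\sqrt{I_\eps(t)}/R_t^2)$, and to upgrade it to the $O(I_\eps(t)/R_t^3)$ appearing in \eqref{stimv} one must combine the vanishing of the first moment of $\omega_\eps(\cdot,t)$ about $B_\eps(t)$ with the quadratic decay of $\mathrm{D}K$ away from the origin. Everything else — the external-field term and the far-region rearrangement estimate — is routine, the latter being entirely standard in this circle of ideas (cf.\ \cite{CaM}).
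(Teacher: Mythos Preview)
Your proof is correct and follows essentially the same strategy as the paper: same derivative formula, same Lipschitz bound for the external field, same near/far decomposition at radius $R_t/2$, and the same rearrangement argument for the far region. The only cosmetic difference is in the near-region bookkeeping: the paper subtracts $|x'|^{-2}$ from $|x'-y'|^{-2}$ via the algebraic identity $|x'|^2-|x'-y'|^2=y'\cdot(2x'-y')$, whereas you subtract the constant vector $K(x-B_\eps(t))$ and invoke the mean-value theorem with $\|DK(z)\|=(2\pi|z|^2)^{-1}$; both routes exploit the vanishing first moment about $B_\eps(t)$ and yield the same $I_\eps(t)/R_t^3$ scaling (your constants in fact give $3/\pi$, which is comfortably below the paper's $5/\pi$).
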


\begin{proof}
We observe that part of the proof is similar to that given in \cite{ISG} (with $B_\eps(t) = 0$). Letting $x=x(x_0,t)$, by \eqref{Cons4}, \eqref{growth B}, and \eqref{w=1} we have,
\begin{equation}
\label{distance1}
\begin{split}
& \frac{\rmd}{\rmd t} |x(x_0,t)- B_\eps(t)| = \big(u(x,t) + F(x,t) - \dot B_\eps(t)\big) \cdot \frac{x-B_\eps(t)}{|x-B_\eps(t)|} \\ & \;\; = \bigg[\int\!\rmd y\, \big(F(x,t) - F(y,t) +K(x-y)\big)\, \omega_\eps(y,t)\bigg] \cdot  \frac{x-B_\eps(t)}{|x-B_\eps(t)|}\;.
\end{split}
\end{equation}
The first term in the second line, due the external field, is easily bounded by using \eqref{2Lipsc}, \eqref{w=1}, and \eqref{Rt},
\begin{equation}
\label{distance4}
\bigg|\int\!\rmd y\, [F(x,t) - F(y,t)]\, \omega_\eps(y,t)\bigg| \le D_t\int\!\rmd y\, |x-y| \, \omega_\eps(y,t) \le 2D_t R_t\;.
\end{equation}
For the second term, we split the integration region into two parts, the disk $A_1=\Sigma(B_\eps(t)|R_t/2)$ and the annulus $A_2 = \Sigma(B_\eps(t)|R_t)\setminus\Sigma(B_\eps(t)|R_t/2)$. Then,
\begin{equation}
\label{in A_1,A_2}
\frac{x-B_\eps(t)}{|x-B_\eps(t)|} \cdot \int\! \rmd y\, K(x-y)\, \omega_\eps(y,t) = H_1 + H_2\;,
\end{equation}
where
\begin{equation}
\label{in A_1}
H_1 = \frac{x-B_\eps(t)}{|x-B_\eps(t)|} \cdot \int_{A_1}\! \rmd y\, K(x-y)\, \omega_\eps(y,t) 
\end{equation}
and
\begin{equation}
\label{in A_2}
H_2 = \frac{x-B_\eps(t)}{|x-B_\eps(t)|} \cdot \int_{A_2}\! \rmd y\, K(x-y)\, \omega_\eps(y,t)\;.
\end{equation}

We first evaluate the contribution of the integration on $A_1$. Recalling \eqref{vel-vor3} and the notation $x^\perp = (x_2,-x_1)$ for $x = (x_1,x_2)$, after introducing the new variables $x'=x-B_\eps(t)$,  $y'=y-B_\eps(t)$, and using that $x'\cdot (x'-y')^\perp=-x'\cdot y'^\perp$, we get,
\begin{equation}
\label{in H_11}
H_1 = \frac{1}{2\pi} \int_{|y'|\leq R_t/2}\! \rmd y'\, \frac{x'\cdot y'^\perp}{|x'||x'-y'|^2}\, \omega_\eps(y'+B_\eps(t))\;.
\end{equation}
By \eqref{c.m.}, $\int\! \rmd y'\,  y'^\perp\, \omega_\eps(y'+B_\eps(t)) = 0$, so that
\begin{equation}
\label{in H_13}
H_1  = H_1'-H_1''\;, 
\end{equation}
where
\begin{eqnarray*}
&& H_1' = \frac{1}{2\pi}  \int_{|y'|\le R_t/2}\! \rmd y'\, \frac {x'\cdot y'^\perp}{|x'|}\, \frac {y'\cdot (2x'-y')}{|x'-y'|^2 \ |x'|^2} \, \omega_\eps(y'+B_\eps(t))\;, \\ && H_1''= \frac{1}{2\pi} \int_{|y'|> R_t/2}\! \rmd y'\, \frac{x'\cdot y'^\perp}{|x'|^3}\, \omega_\eps(y'+B_\eps(t))\;.
\end{eqnarray*}
From \eqref{hstimv} we have $|x'| = R_t$, and hence $|y'| \le R_t/2$ implies $|x'-y'|\ge R_t/2$ and $|2x'-y'|\le |x'-y'|+|x'| \le 3 |x'-y'|$, so that
\begin{equation*}
|H_1'|\leq \frac{3}{\pi R_t^3}  \int_{|y'|\leq R_t/2} \! \rmd y'\, |y'|^2 \, \omega_\eps(y'+B_\eps(t)) \le \frac{3 I_\eps(t)}{\pi R_t^3}\;.
\end{equation*}
To bound $H_1''$ we note that, in view of \eqref{Rt}, the integration is restricted to $|y'|\le R_t$, so that
\begin{equation*}
|H_1''| \le \frac{1}{2\pi R_t} \int_{|y'|> R_t/2}\! \rmd y'\, \omega_\eps(y'+B_\eps(t))\le \frac{2I_\eps(t)}{\pi R_t^3}\;,
\end{equation*}
where in the last inequality we used the Chebyshev's inequality. By \eqref{in H_13} and the previous estimates we conclude that
\begin{equation}
\label{H_14}
|H_1| \le \frac{5 I_\eps(t)}{\pi R_t^3}\;.
\end{equation}

We now evaluate $H_2$. Recalling the definition of $K$, 
\begin{equation*}
|H_2| \le \frac{1}{2\pi} \int_{A_2}\! \rmd y\, \frac 1{|x-y|} \, \omega_\eps(y,t)\;.
\end{equation*}
The integrand is monotonically unbounded as $y\to x$, and so the maximum of the integral is obtained when we rearrange the vorticity mass as close as possible to the singularity. In view of the assumption \eqref{bound1} and since, by \eqref{mt}, $m_t(R_t/2)$ is equal  to the total amount of vorticity in $A_2$, this rearrangement gives,\footnote{Here, we estimate \begin{equation*}\int_{A_2}\! \rmd y\, \frac{\omega_\eps(y,t)}{|x-y|} \le \max\bigg\{\int\! \rmd y'\, \frac {\omega(y')}{|y'|} \colon \int\!\rmd y'\,\omega(y') = m_t(R_t/2)\;, \; 0\le \omega(\cdot) \le M\eps^{-\nu}\bigg\}\;,\end{equation*} and we explicitly find the distribution of vorticity that realizes this maximum. By rearrangement, this is the piecewise constant function given by  the maximum value $M\eps^{-\nu}$ on the disk $\Sigma(0|r)$ and zero otherwise, with $r$ such that the mass constraint is satisfied. Alternatively, we could use \cite[Lemma 2.1]{ISG}, getting a little bit worst estimate for the integral in the left-hand side, giving rise to a constant greater than one in front of the third term in the right-hand side of \eqref{stimv}.}
\begin{equation}
\label{h2}
|H_2| \le \frac{M\eps^{-\nu}}{2\pi} \int_{\Sigma (0|r)}\!\rmd y'\, \frac{1}{|y'|} = M\eps^{-\nu} r \;, 
\end{equation}
where the radius $r$ is such that $M \eps^{-\nu} \pi r^2 = m_t(R_t/2)$. The estimate \eqref{stimv} now follows by \eqref{distance1}, \eqref{distance4}, \eqref{in A_1,A_2}, \eqref{H_14}, and \eqref{h2}.
\end{proof}

\textit{A warning on the notation}. Hereafter in the paper, we shall denote by $C_i$, $i$ an integer index, positive constants which are independent of the parameter $\eps$ and the time $t$.

\begin{lemma}
\label{lem:3}
Let $m_t$ be defined as in \eqref{mt}. For each $\beta\in (0,1/2)$ and $\ell>0$ there exists $\alpha>0$ such that
\begin{equation}
\label{smt}
\lim_{\eps\to 0} \eps^{-\ell} m_t(\eps^\beta) = 0 \quad \forall\, t \in [0,\alpha|\log\eps|]\;. 
\end{equation}
\end{lemma}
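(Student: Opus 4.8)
The plan is to control the outgoing vorticity mass $m_t(\eps^\beta)$ by a moment‐type argument. The key observation is that $m_t(h)$ is tied to the growth of the fluid particles that have strayed far from the center of vorticity: if a positive amount of vorticity sits at distance larger than $h$ from $B_\eps(t)$, then some characteristics have travelled from within $\Sigma(z_*|\eps)$ (hence initially within distance $2\eps$ of $B_\eps(0)$) out to distance $h$, and Lemma~\ref{lem:2} bounds the radial speed of the outermost such particle. Concretely, I would introduce, for a fixed small radius scale, a quantity like
\begin{equation*}
\mu_t(h) := \int\!\rmd x\, \big[|x-B_\eps(t)|-h\big]_+\, \omega_\eps(x,t)\;,
\end{equation*}
or more simply work directly with $m_t(h)$ through the transport structure: $m_t(h)$ equals the $\omega_\eps$-mass of the initial set $\{x_0 : |x(x_0,t)-B_\eps(t)|>h\}$, and differentiating the ``radius'' variable along the flow reduces everything to the pointwise bound \eqref{stimv}.

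Next I would run a bootstrap. On the interval $[0,\alpha|\log\eps|]$ we already have from \eqref{2bound moment} that $I_\eps(t)\le 4\eps^\delta$ with $\delta=2-2D\alpha$. Feeding this into Lemma~\ref{lem:2}, the dangerous term in \eqref{stimv} is $\sqrt{M\eps^{-\nu}m_t(R_t/2)/\pi}$; the term $5I_\eps(t)/(\pi R_t^3)$ is harmless once $R_t$ is bounded below by a fixed (small, $\eps$-independent) constant, which one may assume holds until $T^*_{\eps,\beta}$, and the term $2D_tR_t$ only contributes an exponential factor $e^{2D\alpha|\log\eps|}=\eps^{-2D\alpha}$ over the whole time window. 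The standard device (as in \cite{CaM}) is to estimate, for a decreasing sequence of radii $h_k=\eps^{\beta_k}$ with $\beta_0$ close to $0$ and $\beta_n=\beta$, the mass $m_t(h_k)$ in terms of $m_t(h_{k-1})$: a particle crossing from inside $h_{k-1}$ to outside $h_k$ must have radial speed at least of order $(h_{k-1}-h_k)/t$, which by \eqref{stimv} forces $m_t(h_k)$ to be smaller than $m_t(h_{k-1})$ by a power of $\eps$. Iterating a finite number $n$ of times (with $n=n(\beta,\ell)$ chosen large enough and $\alpha=\alpha(\beta,\ell)$ small enough so that the accumulated loss from the $\eps^{-\nu}$ and $\eps^{-2D\alpha}$ factors is beaten), one arrives at $m_t(\eps^\beta)\le C\eps^{\ell+1}=o(\eps^\ell)$ uniformly on $[0,\alpha|\log\eps|]$. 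The constraint $\beta<1/2$ enters precisely because each iteration step costs roughly a factor $\eps^{1/2}$ from the square root in \eqref{h2}–\eqref{stimv}, while the spatial scale can only be shrunk by jumps of size $\eps^{\beta}$; one needs $\beta<1/2$ for the net gain per step to be positive.

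The main obstacle is making the iteration rigorous: one must be careful that the bound \eqref{stimv} only applies at times when $|x(x_0,t)-B_\eps(t)|$ actually realizes the maximum $R_t$, so the comparison argument should be phrased in terms of the outermost particle (or, cleanly, via a differential inequality for $R_t$ itself, valid a.e.\ since $R_t$ is Lipschitz as a sup of Lipschitz functions), and then one transfers the information to intermediate shells by a Gronwall/continuity-in-the-radius argument rather than by tracking individual trajectories. A secondary point is bookkeeping the constants: one fixes $\ell$, then chooses $n$ large, then chooses $\alpha$ small depending on $n$, $\nu$, $D$, $\beta$; the order of quantifiers matters and must match the statement (``for each $\beta<1/2$ and $\ell>0$ there exists $\alpha>0$''). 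I would also need the a priori lower bound $R_t\ge$ const on the time window — this is where one implicitly uses that we are still before $T^*_{\eps,\beta}$ (or, more honestly, one proves \eqref{smt} under the bootstrap hypothesis $\Lambda_\eps(s)\subset\Sigma(B(s)|\eps^\beta)$ and the final theorem closes the loop). Modulo these structural care points, the estimates themselves are the routine ones already set up in Lemmas~\ref{lem:1} and~\ref{lem:2}.
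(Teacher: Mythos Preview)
Your plan departs substantially from the paper's argument and has real gaps. The paper does \emph{not} use Lemma~\ref{lem:2} here. Instead it introduces a smooth radial cutoff $W_h$ (equal to $1$ on $\Sigma(0|h)$ and $0$ outside $\Sigma(0|2h)$), sets $\mu_t(h)=1-\int W_h(x-B_\eps(t))\,\omega_\eps(x,t)\,\rmd x$, and differentiates $\mu_t(h)$ directly via \eqref{WeakeqF}. The antisymmetry of $K$ together with the Lipschitz bound on $\nabla W_h$ yields $\dot\mu_t(h)\le A(h)\,m_t(h)\le A(h)\,\mu_t(h/2)$ with $A(h)=C(\eps^\delta h^{-4}+\eps^\delta h^{-3}+1)$, bounded by a constant $A_*$ for all $h\ge\eps^{\beta_*}$, $\beta_*\in(\beta,1/2)$, once $\delta>4\beta_*$. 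One then iterates dyadically $n=\lfloor(\beta_*-\beta)|\log_2\eps|\rfloor$ times (so $n\to\infty$ as $\eps\to 0$), uses $\mu_0(2^{-j}\eps^\beta)=0$ for all $j\le n$, and obtains $\mu_t(\eps^\beta)\le (A_*\alpha|\log\eps|)^{n+1}/(n+1)!$, which vanishes faster than any power of $\eps$ by Stirling. Note the mechanism: the smoothness of $W_h$ is what allows the self-interaction term to be symmetrized and controlled by $I_\eps(t)h^{-4}m_t(h)$, and it is the $|\log\eps|$-many iterations that manufacture the factorial; a fixed finite $n$ in this scheme gives only $(A_*\alpha|\log\eps|)^{n+1}/(n+1)!\to\infty$.

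The specific steps you propose do not work as written. The estimate \eqref{stimv} holds only when the particle realizes the instantaneous maximum $R_t$; your suggested workaround of tracking $R_t$ itself gives information about the boundary of the support, not about $m_t(h)$ for $h<R_t$, which is precisely what the lemma asks. Your crossing-time heuristic (``radial speed at least $(h_{k-1}-h_k)/t$'') is not a consequence of \eqref{stimv} for particles at intermediate radii, and with your stated ordering $h_k<h_{k-1}$ the claimed conclusion ``$m_t(h_k)$ smaller than $m_t(h_{k-1})$'' points the wrong way, since $m_t$ is nonincreasing in its argument. The assertion that $5I_\eps(t)/(\pi R_t^3)$ is harmless once $R_t$ is bounded below by an $\eps$-independent constant is incorrect: on the relevant time window $R_t$ is at most of order $\eps^\beta$, never bounded away from zero. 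Finally, in the paper's logical structure Lemma~\ref{lem:3} is established independently of $T^*_{\eps,\beta}$; it is an input to Theorem~\ref{thm:2}, and the bootstrap closes there, not here.
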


\begin{proof}
Given $h>0$, let $x\mapsto W_h(x)$, $x\in \bb R^2$, be a nonnegative smooth function,  depending only on $|x|$, such that
\begin{equation}
\label{W1}
W_h(x) = \begin{cases} 1 & \text{if $|x|\le h$,} \\ 0 & \text{if $|x|\ge 2h$}, \end{cases}
\end{equation}
and, for some $C_1>0$,
\begin{equation}
\label{W2}
|\nabla W_h(x)| < \frac{C_1}{h}\;,
\end{equation}
\begin{equation}
\label{W3}
|\nabla W_h(x)-\nabla W_h(x')| < \frac{C_1}{h^2}\,|x-x'|\;. 
\end{equation}

We define the quantity
\begin{equation}
\label{mass 1}
\mu_t(h) = 1 - \int\! \rmd x \, W_h(x-B_\eps(t))\, \omega_\eps (x,t)\;,
\end{equation}
which is a mollified version of $m_t$, satisfying
\begin{equation}
\label{2mass 3}
\mu_t(h) \le m_t(h) \le \mu_t(h/2)\;.
\end{equation}
In particular, it is enough to prove \eqref{smt} with $\mu_t$ instead of $m_t$. 

To this purpose, we study the time derivative of $\mu_t(h)$. By applying \eqref{WeakeqF} with test function $f(x,t) = W_h(x-B_\eps(t))$, and recalling $u(x,t) = \int\!\rmd y\, K(x-y)\omega_\eps(y,t)$ and \eqref{growth B}, we have, 
\begin{equation}
\label{mass 4}
\begin{split}
\frac{\rmd}{\rmd t} \mu_t(h) & = - \int\! \rmd x\, \nabla W_h(x-B_\eps(t)) \cdot [u(x,t)+ F(x,t) - \dot B_\eps(t)]\,\omega_\eps(x,t) \\ & =  - H_3 - H_4\;, 
\end{split}
\end{equation}
with
\begin{equation*}
\begin{split}
H_3 & = \int\! \rmd x\, \nabla W_h(x-B_\eps(t)) \cdot \int\!\rmd y \, K(x-y)\, \omega_\eps(y,t)\, \omega_\eps(x,t) \\ & = \frac 12 \int\! \rmd x \! \int\! \rmd y\, \omega_\eps(x,t)\,  \omega_\eps(y,t) \, [\nabla W_h(x-B_\eps(t))-\nabla W_h(y-B_\eps(t))] \cdot K(x-y) \;, \\ H_4 & = \int\! \rmd x\, \nabla W_h(x-B_\eps(t)) \cdot \int\!\rmd y \,[F(x,t)-F(y,t)]\, \omega_\eps(y,t)\, \omega_\eps(x,t)\;,
\end{split}
\end{equation*}
where the second expression of $H_3$ is due to the antisymmetry of $K$.

Concerning $H_3$, we introduce the new variables $x'=x-B_\eps(t)$, $y'=y-B_\eps(t)$, and let
\begin{equation*}
F(x',y') = \frac 12 \omega_\eps(x'+B_\eps(t),t)\, \omega_\eps(y'+B_\eps(t),t) \, [\nabla W_h(x')-\nabla W_h(y')] \cdot K(x'-y') \;,
\end{equation*} 
so that $H_3 = \int\!\rmd x' \! \int\!\rmd y'\,F(x',y')$. We observe that $F(x',y')$ is a symmetric function of $x'$ and $y'$ and that, by \eqref{W1}, a necessary condition to be different from zero is if either $|x'|\ge h$ or $|y'|\ge h$. Therefore, 
\begin{equation*}
\begin{split}
H_3  &= \bigg[ \int_{|x'| > h}\!\rmd x' \! \int\!\rmd y' + \int\!\rmd x' \! \int_{|y'| > h}\!\rmd y' -  \int_{|x'| > h}\!\rmd x' \! \int_{|y'| > h}\!\rmd y'\bigg]F(x',y') \\ & = 2 \int_{|x'| > h}\!\rmd x' \! \int\!\rmd y'\,F(x',y')  -  \int_{|x'| > h}\!\rmd x' \! \int_{|y'| > h}\!\rmd y'\,F(x',y') \\ & = H_3' + H_3'' + H_3'''\;.
\end{split}
\end{equation*}
with 
\begin{equation*}
\begin{split}
H_3' & = 2 \int_{|x'| > h}\!\rmd x' \! \int_{|y'| \le h/2}\!\rmd y'\,F(x',y') \;, \quad  H_3'' = 2 \int_{|x'| > h}\!\rmd x' \! \int_{|y'| > h/2}\!\rmd y'\,F(x',y')\;, \\ H_3''' & = -  \int_{|x'| > h}\!\rmd x' \! \int_{|y'| > h}\!\rmd y'\,F(x',y')\;.
\end{split}
\end{equation*}
By the assumptions on $W_h$, we have $\nabla W_h(z) = \eta_h(|z|) z/|z|$ with $\eta_h(|z|) =0$ for $|z| \le h$. In particular, $\nabla W_h(y') = 0$ for $|y'| \le h/2$. Therefore, 
\begin{equation*}
H_3' =  \int_{|x'| > h}\!\rmd x' \, \omega_\eps(x'+B_\eps(t),t) \eta_h(|x'|) \,\frac{x'}{|x'|}\cdot  \int_{|y'| \le h/2}\!\rmd y'\, K(x'-y') \, \omega_\eps(y'+B_\eps(t),t)\;.
\end{equation*}
In view of  \eqref{W2}, $|\eta_h(|z|)| \le C_1/h$, so that 
\begin{equation*}
|H_3'| \le \frac{C_1}{h} m_t(h) \sup_{|x'| > h} \bigg|\frac{x'}{|x'|}\cdot  \int_{|y'| \le h/2}\!\rmd y'\, K(x'-y') \, \omega_\eps(y'+B_\eps(t),t)\bigg| \;.
\end{equation*}
We now observe that the expression inside the modulus in the right-hand side is equal to the term $H_2$ in \eqref{in H_11} (with $h$ in place of $R_t$), which has been bounded in \eqref{H_14} (it is readily seen that the proof works also if the assumption  $|x'|=R_t$ is relaxed to $|x'|\ge R_t$). We conclude that
\begin{equation*}
|H_3'| \le \frac{5C_1I_\eps(t)}{\pi h^4} m_t(h)\;.
\end{equation*}
Now, by \eqref{W3} and then applying the Chebyshev's inequality,
\begin{equation*}
\begin{split}
|H_3''| + |H_3'''| & \le \frac{3C_1}{2\pi h^2} \int_{|x'| \ge h}\!\rmd x' \! \int_{|y'| \ge h/2}\!\rmd y'\,\omega_\eps(y'+B_\eps(t),t)\, \omega_\eps(x'+B_\eps(t),t) \\ & = \frac{3C_1}{2\pi h^2}m_t(h)   \int_{|y'| \ge h/2}\!\rmd y'\,\omega_\eps(y'+B_\eps(t),t)  \le \frac{6C_1 I_\eps(t)}{\pi h^4} m_t(h)\;.
\end{split}
\end{equation*}
In conclusion,
\begin{equation}
\label{h3s}
|H_3| \le  \frac{11C_1 I_\eps(t)}{\pi h^4} m_t(h)\;.
\end{equation}

Concerning $H_4$, we observe that by \eqref{W1} the integrand is different from zero only if $h\le |x-B_\eps(t)|\le 2h$. Therefore, by \eqref{2Lipsc} and \eqref{W2} we have, 
\begin{equation*}
\begin{split}
|H_4| & \le \frac{C_1}{h} 2\|F\|_\infty \int_{|x'|\ge h}\!\rmd x' \omega_\eps(x'+B_\eps(t),t) \int_{|y'|\ge h}\!\rmd y'\,\omega(y'+B_\eps(t),t) \\ & \quad + \frac{C_1}{h} D_t \int_{h \le |x'|\le 2h}\!\rmd x' \omega_\eps(x'+B_\eps(t),t)\int_{|y'| \le h}\!\rmd y'\,|x'- y'| \, \omega(y'+B_\eps(t),t)\;.
\end{split}
\end{equation*}
Since $|x'-y'| \le 3h$ in the domain on integration of the last integral and using the Chebyshev's inequality in the first one we get,
\begin{equation}
\label{h4s}
|H_4| \le \frac{2C_1 \|F\|_\infty I_\eps(t)}{h^3} m_t(h) + 3C_1 D_t m_t(h)\;.
\end{equation}

Given $0<\beta< 1/2$ as in the claim of the lemma, we fix $\beta_* \in (\beta,1/2)$ and choose $\alpha>0$ so small that \eqref{2bound moment} holds with $\delta=2-2D\alpha>4\beta_*$. Then, by \eqref{mass 4}, \eqref{h3s}, \eqref{h4s}, and using that $D_t\le D$, see \eqref{2Lipsc}, we have,
\begin{equation}
\label{2mass 4''}
\frac{\rmd}{\rmd t} \mu_t(h) \le A(h) m_t(h) \quad\forall\, t \in [0,\alpha|\log\eps|]\;,
\end{equation}
where, for some $C_2>0$,
\begin{equation}
\label{mass 4bis}
A(h) = C_2 \bigg(\frac{\eps^\delta}{h^4}+\frac{\eps^\delta}{h^3} + 1\bigg)\;.
\end{equation}
Moreover, there is a constant $A_*>0$ such that $A(h)\le A_*$ for any $h\ge \eps^{\beta_*}$. Therefore, by \eqref{2mass 3} and \eqref{2mass 4''},
\begin{equation}
\label{mass 14'}
\mu_t(h) \le \mu_0(h) + A_* \int_{0}^t \rmd s\, \mu_s(h/2) \quad \forall\,  t \in [0,\alpha|\log\eps|]\quad \forall\, h\ge \eps^{\beta_*}\;,
\end{equation}
which can be iterated $n$ times, provided $2^{-n}h\ge \eps^{\beta_*}$,  so that, for any $t\in [0,\alpha|\log\eps|]$, 
\begin{equation}
\label{mass 15'}
\begin{split}
\mu_t(h) & \le \mu_0(h) + \sum_{j=1}^n \mu_0(2^{-j}h) \frac{(A_*t)^j}{j!} + \frac{A_*^{n+1}}{n!} \int_0^t\!\rmd s\,  (t-s)^n\mu_s(2^{-(n+1)}h) \\ & = \frac{A_*^{n+1}}{n!} \int_0^t\!\rmd s\,  (t-s)^n\mu_s(2^{-(n+1)}h) \le  \frac{(A_*\alpha |\log\eps |)^{n+1}}{(n+1)!}\;,
\end{split}
\end{equation}
where we used that since $\Lambda_\eps(0) \subset \Sigma(z_*|\eps)$ and $\eps <1$ then $\mu_0(2^{-j}h)=0$ for any $j=0,\ldots,n$, and that $\mu_s(2^{-(n+1)}h)\le 1$. By applying \eqref{mass 15'} with $h=\eps^\beta$, $n=\lfloor (\beta_*-\beta)|\log_2\eps| \rfloor$,  and using the Stirling approximation for $(n+1)!$, we obtain that, given $\ell>0$, for $\alpha$ small enough, 
\begin{equation*}
\lim_{\eps\to 0} \eps^{-\ell} \mu_t(\eps^\beta) = 0\quad \forall\, t \in [0,\alpha|\log\eps|]\;,
\end{equation*}
which concludes the proof. 
\end{proof}

\begin{proof}[Proof of Theorem \ref{thm:2}]
By \eqref{2bound moment}, \eqref{hstimv}, \eqref{stimv}, and recalling $D_t\le D$, see \eqref{2Lipsc}, we have, whenever $|x(x_0,t)-B_\eps(t)| = R_t$,
\begin{equation}
\label{stimr}
\frac{\rmd}{\rmd t} |x(x_0,t)- B_\eps(t)| \le 2 D R_t + \frac{20 \eps^\delta}{\pi R_t^3} + \sqrt{\frac{M\eps^{-\nu}\,m_t(R_t/2)}{\pi}} \quad \forall\, t\in [0,\alpha|\log\eps|]\;,
\end{equation}
with $\delta=2-2D\alpha$ and $\alpha>0$ small. This implies that $\Lambda_\eps(t) \subset \Sigma(B_\eps(t)|R(t))$ for any $t\in [0,\alpha|\log\eps|]$, where $R(t)$ is a solution to 
\begin{equation}
\label{stimrbis}
\dot R(t) = 2 D R(t) + \frac{20 \eps^\delta}{\pi R(t)^3} + \sqrt{\frac{M\eps^{-\nu}\,m_t(R(t)/2)}{\pi}}\;, \quad R(0) = \eps\;.
\end{equation}
Indeed, this is true for $t=0$ and, if at some time $t\in (0,\alpha|\log\eps|]$ a fluid particle initially located at  $x_0\in \Lambda_\eps(0)$ reaches the boundary of $\Sigma(B_\eps(t)|R(t))$, then $R(t) = |x(x_0,t)-B_\eps(t)| =  R_t $ necessarily and hence, by \eqref{stimr}, the radial velocity of $x(x_0,t)- B_\eps(t)$ is less than or equal to $\dot R(t)$. 

We now claim that given $\beta'\in (0,1/2)$ there are $\eps'\in (0,1)$ and $\alpha$ such that $R(t)<\eps^{\beta'}$ for any $t\in [0,\alpha|\log\eps|]$ and $\eps\in (0,\eps')$. To prove the claim, we give a proof by contradiction. Let us suppose there is a time $t_1\in (0,\alpha|\log\eps|]$ such that $R(t_1)=\eps^{\beta'}$ and define $t_0 = \inf\{t\in [0,t_1]\colon R(s) > \eps^{\beta_*} \;\forall\, s\in [t,t_1] \}$ with $\beta_*\in (\beta',1/2)$. Then $R(t) \ge \eps^{\beta_*}$ for any $t\in [t_0,t_1]$, which implies $m_t(R(t)/2)\le m_t(\eps^{\beta_*}/2)$ for any $t\in [t_0,t_1]$. We then apply Lemma \ref{lem:3} with $\ell = \nu + 4$ provided $\alpha$ is small enough, so that the last term in the right-hand side of \eqref{stimrbis} is bounded by $\const\eps^2$ on $[t_0,t_1]$. Therefore, we can find a constant $C_3>20/\pi$ such that
\begin{equation*}
\dot R(t) \le 2 D R(t) + C_3 \eps^{\delta-3\beta_*} \quad \forall\, t\in [t_0,t_1]\;.
\end{equation*}
Integrating the above differential inequality we get,
\begin{equation}
\label{stimr2}
\begin{split}
R(t_1) & \le \rme^{2D(t_1-t_0)} \big(R(t_0) + (t_1-t_0)  C_3 \eps^{\delta-3\beta_*}\big)  \\ & \le \eps^{-2D\alpha} \big(\eps^{\beta_*} + C_3 \alpha |\log\eps| \eps^{\delta-3\beta_*}\big)\;.
\end{split}
\end{equation}
As $\delta=2-2D\alpha$, we can choose $\alpha$ so small to have $\min\{\beta_*-2D\alpha;\delta -3\beta_*-2D\alpha\} > \beta'$. Then, there exists $\eps'\in (0,1)$ such that the right-hand side of \eqref{stimr2} is strictly smaller than $\eps^{\beta'}$ for any $\eps\in (0,\eps')$, which contradicts the assumption $R(t_1)=\eps^{\beta'}$.

By the claim just proved and \eqref{2concl2}, we have that $\Lambda_\eps(t)\subset \Sigma(B(t)|r_\eps)$ for any $\eps\in (0,\eps')$, $\alpha$ small enough, and $t\in [0,\alpha|\log\eps|]$, where $r_\eps = \eps^{\beta'} + 2(1+D\alpha|\log\eps|)\eps^{\delta/2}$. Clearly, this concludes the proof of the theorem. Indeed, given $\beta\in (0,1/2)$, by choosing $\beta' \in (\beta, 1/2)$ and $\alpha$ small enough, there exists $\eps_1\in (0,\eps')$ such that $r_\eps<\eps^\beta$ for any $\eps \in (0,\eps_1)$, and hence \eqref{tbeb2} is proved with $\zeta_1=\alpha$.
\end{proof}

\subsection{Proof of Theorem \ref{thm:1}}
\label{sec:2.2}

Theorem \ref{thm:1} follows quite easily from Theorem \ref{thm:2} and we only give a sketch of the proof.  Let $\beta\in (0,1/2)$ be fixed as in the statement of the theorem. We notice that, by continuity, $T_{\eps,\beta}>0$ for any $\eps\in (0,1)$. Recalling \eqref{rmin}, let now $\eps_0'\in (0,1)$ be such that $\mathrm{dist}(\Sigma(z_i(t)|\eps^\beta), \Sigma(z_j(t)|\eps^\beta)) \ge r_\mathrm{min}/2$ for any $t\ge 0$, $i\ne j$, and $\eps\in (0,\eps_0')$. Therefore, for any $\eps\in (0,\eps_0')$ and $t\in [0,T_{\eps,\beta}]$, the blobs evolve with supports $\Lambda_{i,\eps}(t)$ that remain separated by a distance larger than or equal to $r_\mathrm{min}/2$, and hence their mutual interaction remains bounded and Lipschitz. Otherwise stated, during the time interval $[0,T_{\eps,\beta}]$, the $i$-th blob of vorticity $\omega_{i,\eps}(x,t)$ evolves according to a reduced system as in Subsection \ref{sec:2.1}, with $z_*=z_i$ and external field 
\begin{equation}
\label{fk1}
F_{i,\eps}(x,t) = \sum_{\substack{j=1 \\ j\ne i}}^N \int\!\rmd y\, K_1(x,y)\, \omega_{j,\eps}(y,t)\;,
\end{equation}
where $K_1(x,y)$ is any smooth kernel such that $K_1(x,y) = K(x-y)$ if $|x-y|\ge r_\mathrm{min}/2$. Therefore, for some constant $D'>0$,
\begin{equation*}
|F_{i,\eps}(x,t)| \le D' \quad \forall\, (x,t)\in \bb R^2\times [0,T_{\eps,\beta}]\quad\forall\, \eps\in (0,\eps_0')
\end{equation*}
and
\begin{equation*}
|F_{i,\eps}(x,t)-F_{i,\eps}(y,t)| \le  D' |x-y| \quad \forall\, (x,y,t) \in \bb R^2\times\bb R^2\times [0,T_{\eps,\beta}]\quad\forall\, \eps\in (0,\eps_0']\;.
\end{equation*}

It is now easy, with minor adjustments detailed below, to adapt the proof of Theorem \ref{thm:2} and show that there are $\eps_0\in (0,\eps_0']$ and $\zeta_0>0$ such that $\Lambda_{i,\eps}(t) \subset \Sigma(z_i(t)|\eps^\beta)$ for any $\eps\in (0,\eps_0)$, $i=1,\ldots,N$, and $t \le \min\{T_{\eps,\beta};\zeta_0|\log\eps|\}$. By the definition of $T_{\eps,\beta}$, this clearly implies $T_{\eps,\beta}>\zeta_0|\log\eps|$ for any $\eps\in (0,\eps_0)$, thus completing the proof of the theorem. 

The main difference in repeating the analysis of Subsection \ref{sec:2.1} is that here the external fields depend on $\eps$, and are only close to the fields appearing in the right-hand side of the vortex model \eqref{Pointv} in the case $\Gamma=\bb R^2$. This modifies the estimation of the nearness between the centers of vorticity $B_{i,\eps}(t) := a_i^{-1}\int\!\rmd x\, x\, \omega_{i,\eps}(x,t)$ and the corresponding vortices $z_i(t)$, so we discuss only this point. By \eqref{Pointv} and  \eqref{fk1}, for any $i=1,\ldots,N$ and $t \in [0, T_{\eps,\beta}]$,
\begin{equation*}
\begin{split}
\dot B_{i,\eps}(t) - \dot z_i(t) & = a_i^{-1}\int\!\rmd y\, [F_{i,\eps}(y,t) - F_{i,\eps}(B_{i,\eps}(t),t)] \, \omega_{i,\eps}(y,t) \\ & \quad + \sum_{\substack{j=1 \\ j\ne i}}^N \int\!\rmd y\, [K_1(B_{i,\eps}(t),y) - K_1(B_{i,\eps}(t),B_{j,\eps}(t))]\, \omega_{j,\eps}(y,t) \\ & \quad + \sum_{\substack{j=1 \\ j\ne i}}^N a_j \, [K_1(B_{i,\eps}(t),B_{j,\eps}(t)) - K_1(B_{i,\eps}(t),z_j(t))] \\ & \quad + \sum_{\substack{j=1 \\ j\ne i}}^N a_j \, [K_1(B_{i,\eps}(t),z_j(t)) - K_1(z_i(t),z_j(t))]\;,
\end{split}
\end{equation*}
where we used that $K(z_i(t)-z_j(t)) = K_1(z_i(t),z_j(t))$ for $j\ne i$ and $t \in [0, T_{\eps,\beta}]$. Integrating the above identity and arguing as in Lemma \ref{lem:1}, we now obtain, for some constant $C>0$ and any $t \in [0, T_{\eps,\beta}]$,
\begin{equation*}
\Delta(t) \le \Delta(0) \rme^{Ct} + C \rme^{Ct} \int_0^t\!\rmd s\, \sum_{j=1}^N \sqrt{I_{j,\eps}(s)}\;,
\end{equation*}
where $\displaystyle\Delta(t) = \max_{i}|B_{i,\eps}(t)-z_i(t)|$ and $I_{j,\eps}(t) := \int\!\rmd x\, |x-B_{j,\eps}(t)|^2\, \omega_{j,\eps}(x,t)$. This estimate, together with an a priori bound on the moments of inertia $I_{j,\eps}(t)$, gives an estimate like \eqref{2concl2} for $\Delta(t)$. 
\qed

\begin{remark}
\label{rem:2.2}
In the case of a generic domain, for $\eps$ small enough the system remains far from the boundary until the time $T_{\eps,\beta}$, so that also the effect of the boundary can be treated as a regular external (bounded, divergence-free, and Lipschitz) field. A few words on the meaning of the assumption \eqref{rmin} are due in this case. Unlike the case $\Gamma=\bb R^2$, explicit cases where \eqref{rmin} is true are not present in the literature, but it seems very reasonable that this assumption is ``generic'', i.e., it holds almost everywhere. Actually, in presence of boundaries, the global existence of solutions for any choice of initial data and intensities $\{z_i,a_i\}_{i=1}^N$, outside a set of Lebesgue measure zero, has been proved only in the case of a circular domain \cite{MaP84}, but this lack of results does not appear substantial (any regular boundary looks locally like a circle).
\end{remark}

\section{Examples of persistence of vortices on power-law time scales}
\label{sec:3}

In this section we provide examples in which the results of the previous section can be improved, proving that the maximal time for which the blobs of vorticity remain concentrated is not less than an inverse power of the initial size $\eps$ of the blobs.

\subsection{Examples of flow in $\bb R^2$}
\label{sec:3.1}

The simplest example in $\bb R^2$ is given by a blob of vorticity with compact support and alone in the plane. The time goes by and the support could increase. Bounds on the growth are given in \cite{Mar94,Mar96,LoN,Ser,ISG}. In this case, when the vorticity is concentrated around a point, we can obtain a power-law lower bound on the maximal time quoted above. An explicit proof follows by the analysis of the example that we discuss next.

The dynamical system \eqref{Pointv} admits particular choices of  intensities and initial data for which the system evolves in a self-similar configuration, i.e., the polygon with vertices formed by the point vortices rotates and changes its size but remains similar in shape. Denoting by $N$ the number of point vortices, this property has been known for a long time for $N\leq 3$, and more recently for $N=4,5$, and it has been conjectured for larger $N$, see \cite{Are07,NoS,ONe}; some properties have been recently discussed in \cite {IfM}. 
 
For concreteness, we study the case $N=3$, but similar considerations can be made for every $N$. Consider three point vortices of intensities $a_i$ posed in $z_i(t)$. As well known, there exist intensities and initial data for which the bodies approach each other and collide, while for other initial conditions they move away from each other. More precisely, there are intensities and positions for which the three vortices, initially posed on the vertices on a triangle of sides of length $L_{ij}$, in the future remain posed in the vertices of a triangle of side of length $L_{ij}(t)$, where
\begin{equation}
\label{3point}
L_{ij}(t) = L_{ij}(0) \sqrt{1+gt}\;, \quad g>0\;, 
\end{equation}
that is, the triangle grows in the future (and shows a collapse for $t=-g^{-1}$), but remains similar in form. For the time evolution of three point vortices see \cite{Are79} and also \cite{Are07,MaP94}.\footnote{We recall the main conditions for a system of three point vortices to go to infinity \cite{Are79,Are07}. We denote by $(a_1,a_2,a_3)$ the intensities of the vortices and by $L_{ij}$ the distance between vortices $i$ and $j$. There are conditions under which the triangle whose vertexes are given by the positions of the vortices changes size but remains similar in form. These conditions can be easily expressed in terms of the intensities  and the reciprocal distances: $a_1a_2+a_1a_3+a_2a_3=0$ and  $a_1a_2L_{12}^2+a_1a_3L_{13}^2+a_2a_3L_{23}^2=0$. The dynamical system collapses in the past at a critical time and increases its size in the future as the square root of $t$. More precisely, the equations of motion imply
\begin{equation*}
\frac{d}{dt} L^2_{i,j} =\frac{2 A a_k}{\pi}[L^{-2}_{jk}-L^{-2}_{ki}]\;,
\end{equation*}
where $A$ is the area of the triangle determined by the positions of the three vortices with orientation, i.e., reckoned positive if $(i,j,k)$ appear counterclockwise and negative if $(i,j,k)$ appears clockwise. The previous equation implies \eqref{3point}.}

\begin{theorem}
\label{thm:3}
Under the same hypothesis and notation of Theorem \ref{thm:1} with $N=3$, we further assume that the three point vortices evolve according to \eqref{3point}. Then, for each $\beta\in (0,1/2)$ there exist $\eps_0>0$ and $\zeta_0>0$ such that
\begin{equation}
\label{tbeb3}
T_{\eps,\beta} > \eps^{-\zeta_0} \quad\forall\,\eps\in(0,\eps_0)\;. 
\end{equation}
\end{theorem}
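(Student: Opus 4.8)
The plan is to follow the bootstrap philosophy of Theorem~\ref{thm:2}, but now exploiting the self-similar growth law \eqref{3point} to gain a power-law — rather than logarithmic — time horizon. The key observation is that when the triangle of point vortices expands like $L_{ij}(t)=L_{ij}(0)\sqrt{1+gt}$, the mutual distances among the blobs and between each blob and ``infinity'' do not stay bounded but in fact \emph{increase}; consequently the interaction field felt by a given blob, and above all its Lipschitz constant, decays in time. Concretely, if $B_{i,\eps}(t)$ denotes the center of vorticity of the $i$-th blob and $r_{\min}(t):=\min_{i\ne j}|z_i(t)-z_j(t)| \ge c\sqrt{1+gt}$, then the field $F_{i,\eps}$ defined as in \eqref{fk1} (with cutoff at scale $\sim r_{\min}(t)$) satisfies $\|F_{i,\eps}(\cdot,t)\|_\infty \le C/\sqrt{1+gt}$ and a Lipschitz bound $D_t \le C/(1+gt)$, as long as the blobs remain inside disks of radius $\eps^\beta$ around the $z_i(t)$, which is much smaller than $r_{\min}(t)$. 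The crucial gain is that $\int_0^t D_s\,\rmd s \le (C/g)\log(1+gt)$ grows only \emph{logarithmically} in $t$, so the exponential factors $\exp[\int_0^t D_s\,\rmd s]$ in Lemma~\ref{lem:1} become mere powers of $(1+gt)$.

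First I would redo the estimates of Lemma~\ref{lem:1} with this time-dependent $D_t$. From \eqref{Iee}, $I_\eps(t)\le 4\eps^2 (1+gt)^{2C/g}$, and from \eqref{bee}, $|B_\eps(t)-B(t)| \le C\eps\,(1+gt)^{C'/g}\log(1+gt)$. So if we want $\sqrt{I_\eps(t)}\le \eps^{\beta_*}$ (for some $\beta_*\in(\beta,1/2)$), it suffices that $\eps\,(1+gt)^{C/g}\le \eps^{\beta_*}$, i.e., $t \lesssim g^{-1}\eps^{-g(1-\beta_*)/C}$, a power-law window. Next I would revisit Lemma~\ref{lem:3}: the differential inequality \eqref{2mass 4''}--\eqref{mass 4bis} for the mollified mass $\mu_t(h)$ now reads $\frac{\rmd}{\rmd t}\mu_t(h)\le A_t(h)\,m_t(h)$ with $A_t(h) \le C_2\big(\eps^\delta/h^4 + \eps^\delta/h^3 + 1/(1+gt)\big)$ once we insert the time-decaying bounds on $\|F\|_\infty$ and $D_t$ into \eqref{h3s}--\eqref{h4s}; here $\delta$ is a positive exponent coming from $I_\eps(t)\le\eps^\delta$ on the relevant time window. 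For $h\ge\eps^{\beta_*}$ the first two terms are $\le A_*$ and the last integrates to $\le (C/g)\log(1+gt)$, so iterating the integral inequality $n$ times with $h=\eps^\beta$, $n\sim (\beta_*-\beta)|\log_2\eps|$, and using Stirling gives $\mu_t(\eps^\beta)\le \big(C\,(A_* t + g^{-1}\log(1+gt))\big)^{n+1}/(n+1)!$. For this to beat any fixed power $\eps^\ell$ we need $A_* t + g^{-1}\log(1+gt) \lesssim |\log\eps|$, which once more cuts the window down to $t\lesssim \zeta_0|\log\eps|$ \emph{unless} the constant $A_*$ can itself be made small — and this is precisely where the power-law improvement must come from and where the real work lies.

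The main obstacle, therefore, is the constant term ``$1$'' (now ``$1/(1+gt)$'') in $A_t(h)$: integrated over $[0,t]$ it gives only $g^{-1}\log(1+gt)$, which is $O(|\log\eps|)$ exactly when $t$ is a power of $\eps^{-1}$. So the naive bootstrap still only yields $T_{\eps,\beta}\ge\zeta_0|\log\eps|$. To obtain \eqref{tbeb3} I would instead pass to \emph{self-similar coordinates}: rescale space by $\lambda(t):=\sqrt{1+gt}$ and time by $\tau$ with $\rmd\tau = \rmd t/\lambda(t)^2$, so that $\tau\sim g^{-1}\log(1+gt)$ and the point-vortex triangle becomes stationary in the $\tau$-variable. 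In these coordinates the rescaled blob $\tilde\omega_\eps(\xi,\tau) := \lambda(t)^2\,\omega_\eps(\lambda(t)\xi + \text{(center)},t)$ again solves a reduced Euler problem with an external field that is now genuinely bounded and Lipschitz \emph{uniformly in $\tau$} (the expansion has been factored out), while the rescaled blob size is $\eps/\lambda(t)$, which only \emph{shrinks}. Applying Theorem~\ref{thm:2} in the $\tau$-variable gives concentration for $\tau \le \zeta_1|\log(\eps/\lambda)|$; but $\tau\sim g^{-1}\log(1+gt)$, so $\tau\le\zeta_1|\log\eps|$ translates back to $1+gt \lesssim \eps^{-g\zeta_1}$, i.e. $T_{\eps,\beta}\ge \eps^{-\zeta_0}$ with $\zeta_0$ proportional to $g$. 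Care must be taken that the rescaled initial bound \eqref{bound} is still of the form $M'\eps^{-\nu'}$ (it is, since $\lambda\ge 1$ initially and the rescaling only multiplies by bounded factors on the relevant window), that the rescaled regular part of the interaction really is uniformly Lipschitz on the region where the blob lives (true because $\eps^\beta/\lambda \ll r_{\min}/\lambda = $ const), and that the quantitative constants from the Appendix on three-vortex dynamics (in particular the strict positivity of $g$ and the uniform shape of the triangle) are invoked correctly. Once the change of variables is set up, the estimates are word-for-word those of Lemmas~\ref{lem:1}--\ref{lem:3} and the proof of Theorem~\ref{thm:2}, so I would present the argument as: (i) recall the self-similar solution and its properties from the Appendix; (ii) introduce the rescaling and verify the hypotheses of Theorem~\ref{thm:2} hold uniformly in the new time; (iii) apply Theorem~\ref{thm:2} and undo the rescaling to get \eqref{tbeb3}.
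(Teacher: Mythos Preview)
Your opening analysis is on the right track: the decaying Lipschitz constant $D_t\sim L/(1+t)$ is indeed the mechanism, and your redo of Lemma~\ref{lem:1} is correct. But in the second paragraph you make a genuine error in the analysis of Lemma~\ref{lem:3}, and this error is what sends you off toward the rescaling detour.

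You write that for $h\ge\eps^{\beta_*}$ ``the first two terms are $\le A_*$'' and then conclude that the iterated bound involves $A_*t$, forcing $t\lesssim|\log\eps|$. This is wrong: the first two terms $\eps^\delta/h^4+\eps^\delta/h^3$ are bounded, for $h\ge\eps^{\beta_*}$, by $2\eps^{\delta-4\beta_*}$, and since $\beta_*<1/2$ and $\delta=2-2L\alpha$, the exponent $\delta-4\beta_*$ is \emph{strictly positive} for $\alpha$ small. So these terms are not a fixed constant but a positive power of $\eps$. In particular, on the window $t\in[0,\eps^{-\alpha}]$ with $\alpha<\delta-4\beta_*$, one has $\eps^{\delta-4\beta_*}\le 1/(1+t)$, and hence $A_t(h)\le 2C_5/(1+t)$ uniformly. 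The iterated inequality then reads
\[
\mu_t(h)\le\frac{[2C_5\log(1+t)]^{n+1}}{(n+1)!}\le\frac{(2C_5\alpha|\log\eps|+O(1))^{n+1}}{(n+1)!}\;,
\]
with $n\sim(\beta_*-\beta)|\log_2\eps|$, and Stirling gives super-polynomial decay for $\alpha$ small. This is exactly the paper's argument (Theorem~\ref{thm:4}): no rescaling is needed, the direct bootstrap already gives the power-law window once you notice that the ``constant'' you called $A_*$ is itself $O(\eps^{\delta-4\beta_*})$.

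Your rescaling alternative is an interesting idea in spirit, but it is not as clean as you suggest. Passing to self-similar variables $\xi=x/\lambda(t)$, $\rmd\tau=\rmd t/\lambda^2$, with $\tilde\omega(\xi,\tau)=\lambda^2\omega(\lambda\xi,t)$, the rescaled vorticity does \emph{not} satisfy an equation of the form \eqref{WeakeqF}: one picks up an extra inward drift $-(g/2)\xi$ (which is not divergence-free) and a source term $g\tilde\omega$ on the right-hand side. So you cannot apply Theorem~\ref{thm:2} ``word-for-word''; you would have to re-prove Lemmas~\ref{lem:1}--\ref{lem:3} in the presence of these extra terms. This can presumably be done (the extra terms are morally favorable, contracting the blob), but it is more work than the direct argument above, not less.
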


The proof is achieved like that of Theorem \ref{thm:1}, i.e., through the analysis of a reduced problem: a single blob of unitary vorticity moving in an external time dependent vector field $F(x,t)$, that simulates the action of the other two blobs of vorticity. Therefore, it is a divergence-free field, with norm and Lipschitz constant decreasing in time, i.e., for suitable constants $b,L>0$,
\begin{equation}
\label{bounded}
|F(x,t)| \le \frac{b}{\sqrt{1+t}}\;,
\end{equation}
\begin{equation}
\label{Lips}
|F(x,t)-F(y,t)| \le D_t |x-y|\;, \quad D_t=\frac{L}{1+t}\;.
\end{equation}

The main point is the decreasing in time of the Lipschitz constant $D_t$, which allows one to improve the content of Theorem \ref{thm:2} in the following way.

\begin{theorem}
\label{thm:4}
Under the same hypothesis and notation of Theorem \ref{thm:2}, we further assume that the external field satisfies \eqref{Lips}. Then, for each $\beta\in (0,1/2)$ there exist $\eps_1>0$ and $\zeta_1>0$ such that
\begin{equation}
\label{tbeb4}
T_{\eps,\beta}^* > \eps^{-\zeta_1} \quad\forall\,\eps\in(0,\eps_1)\;. 
\end{equation}
\end{theorem}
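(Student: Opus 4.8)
The plan is to follow the proof of Theorem~\ref{thm:2} almost verbatim, the single decisive change being that wherever that argument replaced $D_t$ by the uniform bound $D$ before integrating in time, here we keep $D_t=L/(1+t)$ and integrate it, so that $\int_0^t D_s\,\rmd s=L\log(1+t)$ grows only logarithmically. Every Gronwall factor $\exp[\int_0^t D_s\,\rmd s]$ then becomes the power $(1+t)^{L}$, which for $t\le\eps^{-\zeta}$ is at most $C\eps^{-L\zeta}$ — a negative but arbitrarily small power of $\eps$ once $\zeta$ is small. This is precisely what promotes the logarithmic horizon $\zeta_1|\log\eps|$ of Theorem~\ref{thm:2} to the power-law horizon $\eps^{-\zeta_1}$.

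First I would rerun Lemma~\ref{lem:1} with $D_t=L/(1+t)$, getting $I_\eps(t)\le 4\eps^2(1+t)^{2L}$ and $|B_\eps(t)-B(t)|\le 2\eps(1+L\log(1+t))(1+t)^{L}$. Restricted to $t\in[0,\eps^{-\zeta}]$ these are the analogues of \eqref{2bound moment}--\eqref{2concl2}: $I_\eps(t)\le C\eps^{\delta}$ with $\delta<2$ that can be taken arbitrarily close to $2$ by shrinking $\zeta$, and $|B_\eps(t)-B(t)|\le C\eps^{1-L\zeta}|\log\eps|$. Lemma~\ref{lem:2} is untouched, since it never uses the size of $D_t$. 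The real work is the analogue of Lemma~\ref{lem:3}: for each $\beta<1/2$ and $\ell>0$ there is $\zeta>0$ with $\eps^{-\ell}m_t(\eps^\beta)\to 0$ uniformly on $[0,\eps^{-\zeta}]$. One repeats the mollified-mass computation \eqref{h3s}--\eqref{h4s}, but now, for $h\ge\eps^{\beta_*}$ with $\beta_*\in(\beta,1/2)$ and $\zeta$ small enough that $\delta>4\beta_*$, the $H_3$ term and the $\|F\|_\infty$-part of $H_4$ each contribute a rate bounded by a positive power of $\eps$, while the only genuinely $O(1)$ contribution, $3C_1D_t\,m_t(h)$, now carries the integrable rate $3C_1L/(1+t)$, whose integral over $[0,\eps^{-\zeta}]$ is $\le C\zeta|\log\eps|$. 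Iterating \eqref{mass 14'}--\eqref{mass 15'} with this time-dependent rate gives $\mu_t(\eps^\beta)\le P(t)^{n+1}/(n+1)!$ with $n\sim(\beta_*-\beta)|\log_2\eps|$ and $P(t)\le C\zeta|\log\eps|+o(1)$, so Stirling produces the desired super-polynomial smallness once $\zeta$ is small depending on $\ell$.

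The final bootstrap is then the one in the proof of Theorem~\ref{thm:2}. One obtains $\Lambda_\eps(t)\subset\Sigma(B_\eps(t)|R(t))$ on $[0,\eps^{-\zeta}]$ with $\dot R=\frac{2L}{1+t}R+\frac{20\eps^\delta}{\pi R^3}+\sqrt{M\eps^{-\nu}m_t(R/2)/\pi}$ and $R(0)=\eps$, and the same contradiction argument as after \eqref{stimr} — using the analogue of Lemma~\ref{lem:3} to discard the last term and $\int_{t_0}^{t_1}\frac{\rmd s}{1+s}\le\zeta|\log\eps|+C$ in place of $t_1-t_0\le\alpha|\log\eps|$ — gives, in analogy with \eqref{stimr2}, $R(t_1)\le\eps^{-2L\zeta}\bigl(\eps^{\beta_*}+C\eps^{\delta-3\beta_*-\zeta}\bigr)$, whose exponents tend to $\beta_*$ and $2-3\beta_*$ as $\zeta\to 0$, both strictly larger than $\beta'$ since $\beta_*<1/2$; this contradicts $R(t_1)=\eps^{\beta'}$, so $R(t)<\eps^{\beta'}$ on $[0,\eps^{-\zeta}]$. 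Together with $|B_\eps(t)-B(t)|\le C\eps^{1-L\zeta}|\log\eps|$, and choosing $\beta'\in(\beta,1/2)$ and $\zeta$ small, this yields $\Lambda_\eps(t)\subset\Sigma(B(t)|\eps^\beta)$ for all $t\in[0,\eps^{-\zeta}]$, i.e.\ $T_{\eps,\beta}^*>\eps^{-\zeta}$, which is \eqref{tbeb4} with $\zeta_1=\zeta$.

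I expect the analogue of Lemma~\ref{lem:3} to be the main obstacle: one must check that enlarging the time window from $\alpha|\log\eps|$ to $\eps^{-\zeta}$ does not inflate the accumulated rate $P(t)$ of the mass iteration beyond $O(|\log\eps|)$, for only then does the $1/(n+1)!$ from the $n\sim|\log\eps|$ iterations overcome it. The decay of $D_t$ is exactly what keeps the $O(1)$ part of this rate integrable up to $\eps^{-\zeta}$, at the cost of the constraint $\delta>4\beta_*$, i.e.\ $L\zeta$ small compared to $1/2-\beta_*$ — and this is what ultimately determines how small $\zeta_1$ can be taken.
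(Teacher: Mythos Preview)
Your proposal is correct and follows essentially the same route as the paper's proof: you rerun Lemma~\ref{lem:1} with $D_t=L/(1+t)$ to get $I_\eps(t)\le C\eps^\delta$ with $\delta=2-2L\zeta$ on $[0,\eps^{-\zeta}]$, you redo the mass estimate of Lemma~\ref{lem:3} exploiting that the dominant part $3C_1D_t$ of the rate integrates to $O(\zeta|\log\eps|)$ over this window so that the iterated bound $P(t)^{n+1}/(n+1)!$ with $n\sim(\beta_*-\beta)|\log_2\eps|$ is super-polynomially small, and you close with the same contradiction argument on $R(t)$. The only cosmetic difference is that the paper absorbs the $\eps^{\delta-4\beta_*}$ and $\eps^{\delta-3\beta_*}$ contributions directly into $2C_5/(1+t)$ (valid for $\alpha$ small and $\eps$ small), whereas you keep them as separate $o(1)$ pieces of $P(t)$; both lead to the same $P(t)=O(\zeta|\log\eps|)$ and the same conclusion.
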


\begin{proof}
We only give a sketch of the proof, by focusing on those parts where the decreasing in time of $D_t$ allows us to improve the result. In what follows, we use the same notation of Section \ref{sec:2}. 

By Lemma \ref{lem:1},
\begin{equation*}
I_\eps(t) \le 4\eps^2 (1+t)^{2L}\;, \quad |B_\eps(t)-B(t)| \le 2\eps[1+L\log(1+t)](1+t)^L\;,
\end{equation*}
which gives, for some constant $C_4>0$, 
\begin{equation}
\label{IBi}
I_\eps(t)  \le \ C_4 \eps^\delta\;, \quad |B(t)-B_\eps(t)| \le C_4 (1+\alpha|\log\eps|) \eps^{\delta/2} \quad \forall\, t \in [0, \eps^{-\alpha}]\;,
\end{equation}
with $\delta = 2-2L\alpha>0$, provided $\alpha$ is small enough. 

We now turn to the content of Lemma \ref{lem:3}, obtaining in this case a better result; more precisely, we assert that for each $\beta\in (0,1/2)$ and $\ell>0$ there exists $\alpha>0$ such that
\begin{equation}
\label{smtt}
\lim_{\eps\to 0} \eps^{-\ell} m_t(\eps^\beta) = 0 \quad \forall\, t \in [0,\eps^{-\alpha}]\;. 
\end{equation}

To this purpose, arguing as in that lemma, we fix $\beta_*\in (\beta,1/2)$ and assume $\alpha>0$ so small that \eqref{IBi} holds with $\delta = 2 - 2\alpha L>0$. We then observe that, by \eqref{h3s} and \eqref{h4s},
\begin{equation}
\label{2masst}
\frac{\rmd}{\rmd t} \mu_t(h) \le A_t(h) m_t(h) \quad\forall\, t \in [0,\eps^{-\alpha}]\;,
\end{equation}
where, for some constant $C_5>0$,
\begin{equation*}
A_t(h) = C_5\bigg(\frac{\eps^\delta}{h^4} +\frac{\eps^\delta}{h^3} + \frac{1}{1+t}\bigg) \le C_5\bigg(2\eps^{\delta-4\beta_*} + \frac{1}{1+t}\bigg) \quad \forall\, h\ge \eps^{\beta_*}\;.
\end{equation*}
We choose $\alpha>0$ and $\eps_*\in (0,1)$ small enough to have 
\begin{equation*}
A_t(h) \le \frac{2C_5}{1+t} \quad \forall\, t \in [0,\eps^{-\alpha}] \quad \forall\, h\ge \eps^{\beta_*} \quad \forall\, \eps \in (0,\eps_*)\;,
\end{equation*}
and hence,
\begin{equation*}
\mu_t(h) \le \mu_0(h) + 2C_5 \int_{0}^t \rmd s\, \frac1{1+s}\, \mu_s(h/2) \quad \forall\,  t \in [0,\eps^{-\alpha}]\quad \forall\, h\ge \eps^{\beta_*}\;,
\end{equation*}
which can be iterated $n$ times, provided $2^{-n}h\ge \eps^{\beta_*}$, as done in \eqref{mass 15'}, getting now that, for any $t\in [0,\eps^{-\alpha}]$, 
\begin{equation*}
\mu_t(h) \le (2C_5)^n \int_0^t\!\frac{\rmd s_1}{1+s_1}\cdots \int_0^{s_{n}}\! \frac{\rmd s_{n+1}}{1+s_{n+1}}\mu_{s_{n+1}}(h/2^{n+1}) \le \frac{[2C_5\log(1+t)]^{n+1}}{(n+1)!}\;.
\end{equation*}
Choosing $h=\eps^\beta$, $n=\lfloor (\beta_*-\beta)|\log_2\eps| \rfloor$ and using the Stirling approximation for $(n+1)!$, we obtain that for any $\ell>0$ there is $\alpha$ small enough such that $\lim_{\eps\to 0} \eps^{-\ell} \mu_t(\eps^\beta) = 0$ for any $t \in  [0, \eps^{-\alpha}]$, which implies \eqref{smtt} in view of \eqref{2mass 3}.

By \eqref{hstimv} and \eqref{stimv}, using $D_t = L(1+t)^{-1}$ and the bound \eqref{IBi} on $I_\eps(t)$,  we have now that, whenever $|x(x_0,t)-B_\eps(t)| = R_t$,
\begin{equation*}
\frac{\rmd}{\rmd t} |x(x_0,t)- B_\eps(t)| \le \frac{2L}{1+t} R_t + \frac{5C_4\eps^\delta}{\pi R_t^3} + \sqrt{\frac{M\eps^{-\nu}\,m_t(R_t/2)}{\pi}} \quad \forall\, t\in [0,\eps^{-\alpha}]\;,
\end{equation*}
with $\delta=2-2L\alpha$ and $\alpha>0$ small. Reasoning as in Theorem \ref{thm:2}, this implies that $\Lambda_\eps(t) \subset \Sigma(B_\eps(t)|R(t))$ for any $t\in [0,\eps^{-\alpha}]$, where $R(t)$ is a solution to 
\begin{equation}
\label{stirmtris}
\dot R(t) = \frac{2L}{1+t}  R(t) + \frac{5C_4\eps^\delta}{\pi R(t)^3} + \sqrt{\frac{M\eps^{-\nu}\,m_t(R(t)/2)}{\pi}}\;, \quad R(0) = \eps\;.
\end{equation}

We now show that given $\beta'\in (0,1/2)$ there are $\eps'\in (0,1)$ and $\alpha$ such that $R(t)<\eps^{\beta'}$ for any $t\in [0,\eps^{-\alpha}]$ and $\eps\in (0,\eps')$. We give a proof by contradiction. Let us suppose there is a time $t_1\in (0,\eps^{-\alpha}]$ such that $R(t_1)=\eps^{\beta'}$ and define $t_0 = \inf\{t\in [0,t_1]\colon R(s) > \eps^{\beta_*} \;\forall\, s\in [t,t_1] \}$  with $\beta_*\in (\beta',1/2)$. Then $R(t) \ge \eps^{\beta_*}$ for any $t\in [t_0,t_1]$, which implies $m_t(R(t)/2)\le m_t(\eps^{\beta_*}/2)$ for any $t\in [t_0,t_1]$. We then apply \eqref{smtt} with $\ell = \nu + 4$ provided $\alpha$ is small enough, so that the last term in the right-hand side of \eqref{stirmtris} is bounded by $\const\eps^2$ on $[t_0,t_1]$. Therefore, we can find a constant $C_6>5C_4/\pi$ such that
\begin{equation*}
\dot R(t) \le \frac{2L}{1+t}  R(t) + C_6 \eps^{\delta-3\beta_*} \quad \forall\, t\in [t_0,t_1]\;.
\end{equation*}
Integrating the above differential inequality we get,
\begin{equation}
\label{stimr3}
\begin{split}
R(t_1) & \le \bigg(\frac{1+t_1}{1+t_0} \bigg)^{2L} \big(R(t_0) + (t_1-t_0)  C_6 \eps^{\delta-3\beta_*}\big)  \\ & \le (1+\eps^{-\alpha})^{2L} \big(\eps^{\beta_*} + C_6 \eps^{\delta-3\beta_*-\alpha}\big)\;.
\end{split}
\end{equation}
As $\delta=2-2L\alpha$, we can choose $\alpha$ so small to have $\min\{\beta_*-2L\alpha;\delta -3\beta_*-(2L+1) \alpha\} > \beta'$. Then, there exists $\eps'\in (0,1)$ such that the right-hand side of \eqref{stimr3} is strictly smaller than $\eps^{\beta'}$ for any $\eps\in (0,\eps')$, which contradicts the assumption $R(t_1)=\eps^{\beta'}$.

We can now conclude as in the proof of Theorem \ref{thm:2}. Given $\beta\in (0,1/2)$, by choosing $\beta' \in (\beta, 1/2)$ and $\alpha$ small enough, from the above estimate on $R(t)$ and the second estimate in \eqref{IBi} it follows that there is $\eps_1\in (0,\eps')$ such that $\Lambda_\eps(t)\subset \Sigma(B(t)|\eps^\beta)$ for any $t\in [0,\eps^{-\alpha}]$ and $\eps\in (0,\eps_1)$, thus proving \eqref{tbeb4} with $\zeta_1=\alpha$.
\end{proof}

We omit the proof of how to deduce Theorem \ref{thm:3} from Theorem \ref{thm:4}: it can be easily obtained by adapting to the present context the proof of Theorem \ref{thm:1} discussed in Subsection \ref{sec:2.2}. 

\subsection{An example of flow in a bounded region}
\label{sec:3.2}

Consider a single blob of vorticity in a bounded domain $\Gamma$, initially concentrated around a point $z_0\in \Gamma$. In this case, denoting by $m$ the total mass of the blob, the corresponding point vortex system \eqref{Pointv} reads, 
\begin{equation}
\label{point Gamma}
\dot{z}(t) = \frac m2 \nabla^\perp \gamma(z(t)) \;, \quad z(0)=z_0\;,
\end{equation}
where $\gamma (x) = \gamma_\Gamma (x,x)$, with $\gamma_\Gamma (x,y) =  G_\Gamma (x,y) + \frac{1}{2\pi} \log |x-y|$ and $ G_\Gamma$ the fundamental solution of the Laplace operator in $\Gamma$ vanishing on the boundary.

In what follows, we assume that $\Gamma=\Sigma(0,1)$, the disk of unitary radius centered in the origin. In this case, $G_\Gamma$ is the sum of the Green function in the whole plane plus a contribution given by a negative mirror charge posed in $\bar y = y/|y|^2$, and hence $\gamma_\Gamma(x,y) = \frac{1}{2\pi} \log|x-\bar y|$. 

If initially the blob of vorticity is contained in $\Sigma(0,\eps)$, the time evolution is similar to that of a blob which moves in the whole plane and it is subjected to an external field  which is vanishing as $\eps \to 0$. The origin is an equilibrium for the vortex dynamics \eqref{point Gamma}, and the states with radial symmetric vorticity distribution are stationary for the Euler dynamics. Instead, a blob of vorticity without such symmetry is not stationary in general, and small filaments of vorticity can move away. Nonetheless, by adapting the techniques of previous sections, we can prove that the blob remains concentrated up to times of the order of an inverse power of $\eps$.

\begin{theorem}
\label{thm:5}
Let $\Gamma=\Sigma(0,1)$ and assume that the initial datum $\omega_\eps(x,0)$ of the Euler equations is given by a single blob of vorticity with support $\Lambda_\eps(0) \subset \Sigma(0|\eps)$. Suppose also that there are $M,\nu>0$ such that
\begin{equation*}
|\omega_{i,\eps}(x,0)| \le M \eps^{-\nu}\;.
\end{equation*}
Then, for each $\beta\in (0,1/2)$ there exist $\eps_0>0$ and $\zeta_0>0$ such that
\begin{equation}
\label{tbeb5}
T_{\eps,\beta} > \eps^{-\zeta_0} \quad\forall\,\eps\in(0,\eps_0)\;,
\end{equation}
with $T_{\eps,\beta}$ as defined in \eqref{tbe}, which in this case reduces to
\begin{equation*}
T_{\eps,\beta} =\sup\{t>0 \colon |x(x_0,s)| < \eps^\beta\;\; \forall\, s\in [0,t]\;\forall\, x_0\in \Lambda_\eps(0)\}\;.
\end{equation*}
\end{theorem}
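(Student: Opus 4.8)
The plan is to reduce Theorem \ref{thm:5} to the single-blob-in-external-field framework of Subsection \ref{sec:2.1} and then invoke the power-law persistence of Theorem \ref{thm:4}, exactly as Theorem \ref{thm:3} was deduced from Theorem \ref{thm:4}. First I would note that, as long as the support $\Lambda_\eps(t)$ stays inside $\Sigma(0|\eps^\beta)$ (which by definition holds for $t\in[0,T_{\eps,\beta}]$), the blob evolves in $\Sigma(0,1)$ according to the Euler dynamics with the image-charge correction; equivalently, since the full velocity field in the disk is $u(x,t) = \int\!\rmd y\,[K(x-y) + \nabla^\perp_x\gamma_\Gamma(x,y)]\,\omega_\eps(y,t)$ with $\gamma_\Gamma(x,y) = -\frac{1}{2\pi}\log|x-\bar y|$, $\bar y = y/|y|^2$, I would split the velocity as the whole-plane Biot--Savart part $u_0(x,t) = \int\!\rmd y\,K(x-y)\,\omega_\eps(y,t)$ plus an external field
\begin{equation*}
F(x,t) := \int\!\rmd y\,\nabla^\perp_x\gamma_\Gamma(x,y)\,\omega_\eps(y,t) = -\frac{1}{2\pi}\int\!\rmd y\,\frac{(x-\bar y)^\perp}{|x-\bar y|^2}\,\omega_\eps(y,t)\;.
\end{equation*}
Because $\supp\omega_\eps(\cdot,t)\subset\Sigma(0|\eps^\beta)$, every $\bar y$ with $y$ in the support lies at distance $\ge 1/\eps^\beta - \eps^\beta$ from the origin, hence at distance $\ge 1 - 2\eps^\beta$ from $x\in\Sigma(0|\eps^\beta)$; so $F$ is smooth and, crucially, its sup norm and Lipschitz constant on $\Sigma(0|\eps^\beta)$ are bounded by absolute constants for all small $\eps$.

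The second step is to upgrade this to the time-decay bounds \eqref{bounded}, \eqref{Lips} required by Theorem \ref{thm:4}. This is where the rotational symmetry of the disk enters. Since the point-vortex trajectory is the origin (an equilibrium of \eqref{point Gamma}), and since the center of vorticity $B_\eps(t)$ is moved only by $F$ via \eqref{growth B}, one expects $B_\eps(t)$ to stay within $O(\eps)$ of the origin — but the decay in $t$ must be extracted more carefully. The honest route is to observe that $F(x,t)$, evaluated near the origin, is essentially $-\frac{m}{2\pi}\,x^\perp$ to leading order (a pure rotation field, whose curl is what drives the tiny precession), plus corrections of size $O(\eps)$; the Lipschitz constant of $F$ restricted to the actual support is then uniformly bounded but, unlike the $\bb R^2$ three-vortex case, does \emph{not} obviously decay. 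Here I would instead mimic the structure of Theorem \ref{thm:4} directly: the key estimates \eqref{Iee}, \eqref{stimv}, \eqref{h3s}, \eqref{h4s} all go through with $D_t$ replaced by the uniform constant, giving at best the logarithmic time of Theorem \ref{thm:2}; to reach the power law one must exploit that the dominant part of $F$ is a rigid rotation about the origin, under which $I_\eps$ and $R_t$ are \emph{exactly invariant}, so only the $O(\eps)$ remainder of $F$ contributes to their growth. Concretely, writing $F = F_{\mathrm{rot}} + \tilde F$ with $F_{\mathrm{rot}}(x) = -\frac{m}{2\pi}x^\perp$ and $\tilde F$ having Lipschitz constant $O(\eps^{1-2\beta})$ on the support (from Taylor-expanding $(x-\bar y)^\perp/|x-\bar y|^2$ in the small quantities $x,1/\bar y$), the effective Lipschitz constant entering the growth estimates is $D_t = O(\eps^{1-2\beta})$, a small constant rather than a decaying function — which already makes $\delta = 2 - 2D\alpha$ essentially $2$ and yields persistence for $t\le\eps^{-\alpha}$ on redoing the bootstrap of Theorem \ref{thm:4} verbatim.

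The final step is the bootstrap closure itself, identical in form to the end of the proof of Theorem \ref{thm:4}: one sets up $R(t)$ solving the comparison ODE $\dot R = 2D_t R + 5C_4\eps^\delta/(\pi R^3) + \sqrt{M\eps^{-\nu}m_t(R/2)/\pi}$ with $R(0)=\eps$, proves by the same contradiction argument (splitting at the last time $R$ hits $\eps^{\beta_*}$, $\beta_* \in (\beta',1/2)$, and using the analogue of \eqref{smtt}) that $R(t) < \eps^{\beta'}$ for $t\le\eps^{-\alpha}$, and combines with the bound $|B_\eps(t)| = |B_\eps(t)-B(t)| \le C_4(1+\alpha|\log\eps|)\eps^{\delta/2}$ to get $\Lambda_\eps(t)\subset\Sigma(0|\eps^\beta)$ for all $\eps\in(0,\eps_1)$, $t\le\eps^{-\alpha}$ — i.e. \eqref{tbeb5} with $\zeta_0 = \alpha$. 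One self-consistency check is needed: the bounds on $F$ were derived \emph{assuming} $\Lambda_\eps(t)\subset\Sigma(0|\eps^\beta)$, and the bootstrap delivers exactly that on the same time interval, so the argument closes.

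The main obstacle, and the point deserving the most care, is the second step: producing a genuinely small (in $\eps$) effective Lipschitz constant for the image-charge field restricted to the evolving support. Merely bounding $F$ and its gradient by absolute constants gives only the logarithmic bound of Theorem \ref{thm:2}; the power law requires recognizing that the $O(1)$ part of $F$ near the origin is a rotation that leaves $I_\eps$ and the radial coordinate $R_t$ invariant, so that in the differential inequalities \eqref{stimv}, \eqref{h3s}, \eqref{h4s} one may legitimately replace $D_t$ by the $O(\eps^{1-2\beta})$ Lipschitz constant of the remainder $\tilde F$. Making this decomposition precise — in particular checking that the rotation-invariance cancellation survives in the $H_1$, $H_3$, $H_4$ estimates, which involve $B_\eps(t)$ rather than the true origin — is the technical heart of the proof; everything else is a transcription of Subsections \ref{sec:2.1} and the proof of Theorem \ref{thm:4}.
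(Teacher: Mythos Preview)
Your overall strategy---cast the disk dynamics as the reduced single-blob system of Subsection~\ref{sec:2.1} with the image-charge contribution playing the role of $F$, show that $F$ has an $\eps$-small Lipschitz constant on the evolving support, and then rerun the bootstrap of Theorems~\ref{thm:2}/\ref{thm:4}---is exactly the paper's. But your second step is both overcomplicated and miscomputed. You assert that the Lipschitz constant of $F$ on $\Sigma(0|\eps^\beta)$ is merely $O(1)$ and then propose to extract smallness by subtracting a rigid rotation $F_{\mathrm{rot}}(x)=-\tfrac{m}{2\pi}x^\perp$, leaving a remainder with Lipschitz constant $O(\eps^{1-2\beta})$. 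In fact no decomposition is needed: for $y\in\Sigma(0|\eps^\beta)$ the image point $\bar y=y/|y|^2$ satisfies $|\bar y|\ge\eps^{-\beta}$, so for $x\in\Sigma(0|\eps^\beta)$ one has $|x-\bar y|\ge\eps^{-\beta}-\eps^\beta$ and hence $|D_xF_1(x,y)|=O(|x-\bar y|^{-2})=O(\eps^{2\beta})$. This is precisely the paper's estimate \eqref{4prop.3}, and it gives directly $D_t=\kappa\eps^{2\beta}$ for $F$ itself on the support. Your claimed leading order $F\approx-\tfrac{m}{2\pi}x^\perp$ is also wrong: $F_1(0,y)=-\tfrac{1}{2\pi}y^\perp$, so $F$ is to leading order a (small) \emph{constant} vector $-\tfrac{1}{2\pi}\int y^\perp\omega_\eps(y,t)\,\rmd y$, not a rotation, and in any case $|F|=O(\eps^\beta)$, not $O(1)$. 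The ``rotation-invariance cancellation'' you flag as the technical heart is therefore a red herring.

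With $D_t=\kappa\eps^{2\beta}$ in hand, the paper simply takes $\alpha<2\beta$, so that $D_t\cdot t\le\kappa\eps^{2\beta-\alpha}\to0$ on $[0,T_{\eps,\beta}\wedge\eps^{-\alpha}]$; Lemma~\ref{lem:1} then yields $I_\eps(t)\le4\rme^{2\kappa}\eps^2$ and $|B_\eps(t)|\le C\eps$ uniformly on that interval (note the full exponent~$2$ is retained, not merely some $\delta<2$). The iteration for $\mu_t(h)$ and the contradiction argument for $R(t)$ then proceed exactly as in your third step, now with $A(h)\le C\eps^\eta$ for a positive $\eta$, and the bootstrap closes with $\zeta_0=\alpha$. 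So your plan works once you replace the erroneous rotation decomposition by the direct observation that the image charges sit at distance $\ge\eps^{-\beta}$ from the blob.
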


\begin{proof}
Without lack of generality we assume that the total mass of vorticity equals one.

Each fluid particle moves in the velocity field produced by the vorticity via  $K_\Gamma (x,y)$,
\begin{equation}
\label{v.field}
K_\Gamma (x,y) = -\nabla_x^\perp \frac{1}{2\pi} \log |x-y| + \nabla_x^\perp \frac{1}{2\pi} \log |x-\bar{y}|\;,
\end{equation}
with $\bar y = y/|y|^2$ as previously defined. Therefore, as long as $\Lambda_\eps(t)$ does not intersect the boundary of the disk $\Gamma$ (in particular, up to time $T_{\eps,\beta}$), the vorticity $\omega_\eps(x,t)$ evolves as in the reduced problem in $\bb R^2$ defined by \eqref{Cons4}, \eqref{WeakeqF}, with the role of external field played by 
\begin{equation}
\label{F}
F(x,t) = \int_\Gamma\! \rmd y\, F_1(x,y)\, \omega_\eps(y,t)\;, 
\end{equation}
where
\begin{equation}
\label{F1}
F_1(x,y) = \nabla_x^\perp \frac{1}{2\pi} \log |x-\bar{y}|\;.
\end{equation}
Since $\bar y = y/|y|^2$, a straightforward computation shows that there is $\kappa>0$ such that, for any $\delta \in (0,1/2)$,
\begin{equation}
\label{4prop.3}
|F_1(x,y)-F_1(z,y)| \le \kappa \delta^2 |x-z| \quad \forall\, x,y,z\in \Sigma(0|\delta) \;,
\end{equation}
which is the key property of $F_1(x,y)$ that will be used in the subsequent analysis.

We study the dynamics in the time interval $[0,T_{\eps,\beta}\wedge\eps^{-\alpha}]$, $\alpha>0$. Our task is to prove that there is $\alpha$ such that $\Lambda_\eps(t) \subset \Sigma(0|\eps^\beta)$ for any $t\in [0,T_{\eps,\beta}\wedge\eps^{-\alpha}]$ and $\eps$ small enough, which implies, by continuity, $T_{\eps,\beta}> \eps^{-\alpha}$.

Hereafter, we assume $\eps$ small enough to have $\eps^\beta\le 1/2$, so that, from \eqref{F} and \eqref{4prop.3},
\begin{equation*}
|F(x,t)-F(z,t)| \le \kappa \eps^{2\beta} |x-z| \quad \forall\, x,y \in \Lambda_\eps(t)\quad\forall\, t\in  [0,T_{\eps,\beta}\wedge\eps^{-\alpha}]\;.
\end{equation*}
This implies that we can apply Lemma \ref{lem:1} with $D_t= \kappa \eps^{2\beta}$ for any $t\in  [0,T_{\eps,\beta}\wedge\eps^{-\alpha}]$. Indeed, both the proofs of the estimates for $I_\eps(t)$ and $|B_\eps(t)-B(t)|= |B_\eps(t)|$ require the control of $|F(x,t)-F(y,t)|$ only for $x,y\in \Lambda_\eps(t)$. Therefore,
\begin{equation*}
I_\eps(t) \le 4\eps^2 \rme^{2\kappa\eps^{2\beta} t}\;, \quad |B_\eps(t)| \le 2\eps(1+\kappa\eps^{2\beta} t) \rme^{\kappa\eps^{2\beta} t} \quad \forall\, t\in  [0,T_{\eps,\beta}\wedge\eps^{-\alpha}]\;,
\end{equation*}
from which we get, assuming $\alpha < 2\beta$, 
\begin{equation}
\label{IBi2}
I_\eps(t)  \le \ 4 \rme^{2\kappa} \eps^2\;, \quad |B_\eps(t)| \le  2(1+\kappa) \rme^{2\kappa} \eps \quad \forall\, t\in  [0,T_{\eps,\beta}\wedge\eps^{-\alpha}]\;.
\end{equation}

By the first estimate in \eqref{IBi2}, the analysis of Lemma \ref{lem:3} can be repeated in this case, obtaining the following result: for each $\beta\in (0,1/2)$ there exists $\alpha>0$ such that
\begin{equation}
\label{smtt2}
\lim_{\eps\to 0} \sup_{\ell>0} \eps^{-\ell} m_t(\eps^\beta) = 0 \quad \forall\, t \in [0,T_{\eps,\beta}\wedge\eps^{-\alpha}]\;. 
\end{equation}

Indeed, arguing as in the proof of that lemma, in view of \eqref{h3s} and \eqref{h4s} we have,
\begin{equation*}
\frac{\rmd}{\rmd t} \mu_t(h) \le A(h) m_t(h) \quad\forall\, t \in [0,T_{\eps,\beta}\wedge\eps^{-\alpha}]\;,
\end{equation*}
where, in this case, for some constant $C_7>0$,
\begin{equation*}
A(h) = C_7 \bigg(\frac{\eps^2}{h^4} +\frac{\eps^2}{h^3} +\eps^{2\beta} \bigg) \le 3 C_7 \eps^\eta \quad \forall\, h\ge \eps^{\beta_*}\;,
\end{equation*}
provided $\eta\le \min\{2\beta, 2-4\beta_*\}$. Hence,
\begin{equation*}
\mu_t(h) \le \mu_0(h) + 3C_7\eps^\eta \int_{0}^t \rmd s\, \mu_s(h/2) \quad \forall\,  t \in [0,T_{\eps,\beta}\wedge\eps^{-\alpha}]\quad \forall\, h\ge \eps^{\beta_*}\;,
\end{equation*}
which can be iterated $n$ times, provided $2^{-n}h\ge \eps^{\beta_*}$, as done in \eqref{mass 15'}, getting now, for any $t\in [0,T_{\eps,\beta}\wedge\eps^{-\alpha}]$, 
\begin{equation*}
\mu_t(h) \le \frac{(3C_7\eps^\eta t)^{n+1}}{(n+1)!} \le \frac{(3C_7\eps^{\eta-\alpha})^{n+1}}{(n+1)!} \;.
\end{equation*}
Choosing $h=\eps^\beta$, $n=\lfloor (\beta_*-\beta)|\log_2\eps| \rfloor$ and using the Stirling approximation for $(n+1)!$, we obtain that there is $\alpha$ small enough such that $\lim_{\eps\to 0} \eps^{-\ell} \mu_t(\eps^\beta) = 0$ for any $\ell>0$ and $t \in [0,T_{\eps,\beta}\wedge\eps^{-\alpha}]$, which implies \eqref{smtt2} in view of \eqref{2mass 3}.

Now, by \eqref{hstimv} and \eqref{stimv}, using $D_t = \kappa\eps^{2\beta}$ and the bound \eqref{IBi2} on $I_\eps(t)$,  we have that, whenever $|x(x_0,t)-B_\eps(t)| = R_t$ and $t\in [0,T_{\eps,\beta}\wedge\eps^{-\alpha}]$,
\begin{equation*}
\frac{\rmd}{\rmd t} |x(x_0,t)- B_\eps(t)| \le 2\kappa\eps^{2\beta} R_t + \frac{20 \rme^{2\kappa} \eps^2}{\pi R_t^3} + \sqrt{\frac{M\eps^{-\nu}\,m_t(R_t/2)}{\pi}}\;.
\end{equation*}
Arguing as in the proof of Theorem \ref{thm:2}, this implies that $\Lambda_\eps(t) \subset \Sigma(B_\eps(t)|R(t))$ for any $t\in [0,T_{\eps,\beta}\wedge\eps^{-\alpha}]$, where $R(t)$ is a solution to 
\begin{equation}
\label{stirmquadris}
\dot R(t) = 2\kappa\eps^{2\beta} R(t) + \frac{20 \rme^{2\kappa} \eps^2}{\pi R(t)^3} + \sqrt{\frac{M\eps^{-\nu}\,m_t(R(t)/2)}{\pi}}\;, \quad R(0) = \eps\;.
\end{equation}

We now show that given $\beta'\in (0,1/2)$ there are $\eps'\in (0,1)$ and $\alpha$ such that $R(t)<\eps^{\beta'}$ for any $t\in [0,T_{\eps,\beta}\wedge\eps^{-\alpha}]$ and $\eps\in (0,\eps')$. By contradiction, let us suppose there is a time $t_1\in (0,\eps^{-\alpha}]$ such that $R(t_1)=\eps^{\beta'}$ and define $t_0 = \inf\{t\in [0,t_1]\colon R(s) > \eps^{\beta_*} \;\forall\, s\in [t,t_1] \}$  with $\beta_*\in (\beta',1/2)$. 
Then $R(t) \ge \eps^{\beta_*}$ for any $t\in [t_0,t_1]$, which implies $m_t(R(t)/2)\le m_t(\eps^{\beta_*}/2)$ for any $t\in [t_0,t_1]$. We then apply \eqref{smtt2} with $\ell = \nu + 4$ provided $\alpha$ is small enough, so that the last term in the right-hand side of \eqref{stirmquadris} is bounded by $\const\eps^2$ on $[t_0,t_1]$. Therefore, we can find a constant $C_8>0$ such that
\begin{equation*}
\dot R(t) \le 2\kappa\eps^{2\beta}  R(t) + C_8 \eps^{2-3\beta_*} \quad \forall\, t\in [t_0,t_1]\;,
\end{equation*}
which implies,
\begin{equation}
\label{stimr4}
\begin{split}
R(t_1) & \le \rme^{2\kappa\eps^{2\beta}(t_1-t_0)} \big(R(t_0)+(t_1-t_0)C_8 \eps^{2-3\beta_*}\big) \\ & \le \rme^{2\kappa\eps^{2\beta-\alpha}} \big(\eps^{\beta_*} + C_7 \eps^{2-3\beta_*-\alpha}\big)\;.
\end{split}
\end{equation}
As $2-3\beta_*>1/2 >\beta'$, there exist $\alpha>0$ and $\eps'\in (0,1)$ such that the right-hand side of \eqref{stimr4} is strictly smaller than $\eps^{\beta'}$ for any $\eps\in (0,\eps')$, which contradicts the assumption $R(t_1)=\eps^{\beta'}$.

We can now conclude the proof of the theorem. Given $\beta\in (0,1/2)$, by choosing $\beta' \in (\beta, 1/2)$ and $\alpha$ small enough, from the above estimate on $R(t)$ and the second estimate in \eqref{IBi2} it follows that there is $\eps_0\in (0,\eps')$ such that $\Lambda_\eps(t)\subset \Sigma(0|\eps^\beta)$ for any $t\in [0,T_{\eps,\beta}\wedge\eps^{-\alpha}]$ and $\eps\in (0,\eps_0)$, thus proving \eqref{tbeb5} with $\zeta_0=\alpha$.
\end{proof}

\section{Analysis of a toy model}
\label{sec:4}

As discussed before, it is too difficult to improve rigorously the estimate \eqref{tbeb1} in general. The difficult task is to control the self-energy of a blob. Perhaps, some better results could be obtained by well-preparing the initial state: blobs of vorticity with a radial symmetry. In this case, the self energy is exactly zero and this trick is used in \cite{Bat} to justify the point vortex model. Of course, the time evolution destroys this symmetry and the justification is weak. However, we can hope that these bad effects, as the initial concentration is large, become important only after a long time, but there is not rigorous proof of this. 

To have some hint in the study of the problem, we introduce a very schematic toy model: a point $x$ in the plane moving under the action of two velocity fields $F(x,t)$ and $g(x,B(t))$, where $B(t)$ is a solution of the equation,
\begin{equation}
\label{eF}
\dot B(t) = F(B(t),t)\;.
\end{equation}

The field $F$ simulates the action of other vortices and it is assumed smooth and divergence-free, the field $g$ simulates the action of the vorticity belonging to the blob itself,
\begin{equation}
\label{g}
g(x,B(t))= - \nabla^\perp \log |x- B(t)| = -\frac{(x-B(t))^\perp}{(x-B(t))^2}\;,
\end{equation}
where, to simplify the notation, we assumed the intensity of the blob equals to $2\pi$. We refer the reader to Remark \ref{rem:4.2} below for more details on the justification of the choice \eqref{g}. 

\begin{theorem}
\label{thm:6}
Let $F(x,t) \in \bb R^2$, $(x,t) \in \bb R^2\times\bb R$, be a (time-dependent) smooth vector field such that $\mathrm{div}\,F(x,t) = 0$. We also assume that $F$, $\partial_t F$, $D_xF$, $\partial_t D_xF$, $D_x^2F$, $\partial_t D_x^2F$, and $D_x^3F$ are uniformly bounded. Given $z_*\in \bb R^2$, we denote by $t\mapsto B(t)$, $t\in\bb R$, the solution to \eqref{eF} with initial condition $B(0) = z_*$.

Let $x(t)$, $t\in I$, be the maximal solution to the Cauchy problem,
\be
\label{p:1}
\dot x(t) = F(x(t),t) - \nabla^\perp \log |x(t)- B(t)|\;, \quad x(0)=x_0\;,
\ee
with $\eps:=|x_0 - z_*|>0$. For each $\beta\in (0,1)$ there exist $\eps_0\in (0,1]$ and $c_0>0$ such that, for any $\eps\in (0,\eps_0)$ we have,
\begin{itemize}
\item[(i)] $[0,\eps^{-\beta}]\subset I$;
\item[(ii)] $|x_0 - z_*| -  c_0 \eps^{3-\beta} \le |x(t)-B(t)| \le |x_0 - z_*| + c_0 \eps^{3-\beta}$ for any $t\in [0,\eps^{-\beta}]$.
\end{itemize}
\end{theorem}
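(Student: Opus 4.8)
The natural strategy is to pass to the co-rotating frame centred at $B(t)$: write $y(t) := x(t) - B(t)$, so that \eqref{p:1} and \eqref{eF} give
\[
\dot y(t) = F(B(t)+y(t),t) - F(B(t),t) - \nabla^\perp \log|y(t)| =: G(y(t),t) - \nabla^\perp \log|y(t)|,
\]
where $G(y,t) = F(B(t)+y,t)-F(B(t),t)$ satisfies $G(0,t)=0$ and inherits the boundedness of all the derivatives of $F$ (uniformly in $t$, since $\mathrm{div}\,F=0$ makes $B$ a well-defined global flow and the bounds on $F$ are time-uniform). The singular term $-\nabla^\perp\log|y| = -y^\perp/|y|^2$ is tangential: it has zero radial component, so it does not directly move $|y|$. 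Hence writing $\rho(t):=|y(t)|^2$ we get $\dot\rho = 2\,y\cdot\dot y = 2\,y\cdot G(y,t)$, and the singular rotation drops out of the equation for the modulus entirely. Since $G(0,t)=0$ and $D_yG$ is bounded, $|y\cdot G(y,t)| \le \|D_xF\|_\infty\,|y|^2 = \|D_xF\|_\infty\,\rho$, so Gronwall gives $\rho(t)\le \eps^2 e^{Ct}$ — but this only yields control up to times of order $|\log\eps|$, which is \emph{not} enough for the $\eps^{-\beta}$ horizon claimed. The whole point of the theorem is that the leading $O(|y|^2)$ contribution to $\dot\rho$ must cancel on average because of the fast rotation, leaving a genuinely smaller drift.

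\textbf{The averaging argument.} The fast rotation must be exploited. Near $y=0$ the dynamics of $y$ is dominated by $-y^\perp/|y|^2$, which rotates $y$ with angular speed $\sim 1/|y|^2 \sim 1/\eps^2$, i.e.\ one full turn in time $O(\eps^2)$. Over one such fast period the slow quantity $\rho$ changes by $O(\eps^2)\cdot O(\eps^2) = O(\eps^4)$, and the $O(\rho)$ term in $\dot\rho$, being $y\cdot G(y,t)$ with $G$ depending on $y$, averages against the rotation: the mean of $\theta\mapsto y\cdot D_yG(0,t)y$ over the circle $y = r(\cos\theta,\sin\theta)$ is $\tfrac{r^2}{2}\,\mathrm{tr}(D_yG)(0,t) = \tfrac{r^2}{2}\,(\mathrm{div}_x F)(B(t),t) = 0$ by incompressibility. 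So the averaged drift of $\rho$ comes only from the next order, $O(|y|^3)=O(\eps^3)$, in $G$, which is where the exponent $3$ in $(ii)$ comes from. Concretely I would introduce the angle $\varphi$ defined by $y = \sqrt{\rho}\,(\cos\varphi,\sin\varphi)$, derive $\dot\varphi = -1/\rho + (\text{bounded})$, change the independent variable from $t$ to $\varphi$ (valid as long as $\rho$ stays small, so $\dot\varphi\ne 0$), and perform one (or two) steps of near-identity averaging / integration by parts in $\varphi$: $\frac{d\rho}{d\varphi} = \frac{\dot\rho}{\dot\varphi} = -\rho\,(y\cdot G) + O(\rho^3)$, and $\int y\cdot D_yG(0,t)y\,d\varphi$ over a full turn vanishes by $\mathrm{div}\,F=0$, so the net change of $\rho$ per turn is $O(\rho^3) = O(\eps^6)$; accumulating $O(\eps^{-\beta}/\eps^2) = O(\eps^{-2-\beta})$ turns gives a total variation $O(\eps^{6})\cdot O(\eps^{-2-\beta}) = O(\eps^{4-\beta})$, i.e.\ $|\,\rho(t)-\eps^2| \le C\eps^{4-\beta}$, hence $|\,|y(t)| - \eps\,| \le C\eps^{3-\beta}$, which is $(ii)$. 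The hypotheses on $\partial_t F, D_x^2F, \partial_t D_xF,\dots, D_x^3F$ are exactly what is needed to control: the $t$-dependence of $G$ and $B$ while integrating over $\varphi$ (turning the $t$-integral into a $\varphi$-integral), the error terms in the averaging change of variables, and the Taylor remainder at order $3$.

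\textbf{Bootstrap and the time horizon.} To make this rigorous I would run a bootstrap / continuity argument: fix $\beta\in(0,1)$, pick $\beta' \in (\beta,1)$, and define $T := \sup\{t\in I \cap [0,\eps^{-\beta}] : |y(s)| \le \eps^{\beta'} \text{ for all } s\le t\}$ (or more simply $|y(s)|\le 2\eps$). On $[0,T]$ the solution stays in the small ball where $\dot\varphi \ne 0$ and all the above estimates apply, yielding $|\,|y(t)|-\eps\,|\le C\eps^{3-\beta}\ll \eps$; this strictly improves the a priori bound, so by continuity $T$ cannot be an interior point of $[0,\eps^{-\beta}]$ caused by $|y|$ leaving the ball, and $T$ cannot be caused by $I$ ending before $\eps^{-\beta}$ because the a priori bound keeps $y$ bounded away from the only singularity $y=0$ (standard continuation of ODE solutions gives $[0,\eps^{-\beta}]\subset I$, which is $(i)$). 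For $\eps$ small enough ($\eps<\eps_0$) the constant $C$ and the choice of $\beta'$ make the bootstrap close, and setting $c_0 := C$ gives $(ii)$. \textbf{The main obstacle} is carrying out the averaging cleanly: one must justify the change of variable $t\leftrightarrow\varphi$ uniformly over the whole horizon, track how the slowly varying coefficients $D_yG(0,t)$ and higher terms depend on $t$ (equivalently on $\varphi$ through the non-autonomous flow), and show that the accumulated averaging errors over $\sim\eps^{-2-\beta}$ fast periods really do stay at size $\eps^{4-\beta}$ rather than being spoiled by secular growth — this is precisely where the incompressibility $\mathrm{div}\,F=0$ and the bounds on the higher derivatives of $F$ are used, and where the restriction $\beta<1$ (rather than $\beta<1/2$ as in the Euler case) comes from.
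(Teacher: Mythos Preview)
Your plan is correct and rests on exactly the mechanism the paper uses: the singular term is purely tangential, the leading radial drift $y\cdot D_xF(B(t),t)\,y$ has zero angular average precisely because $\mathrm{div}\,F=0$, and the next (cubic in $y$) term also averages out by parity, so the effective drift of $|y|$ is two orders smaller than the naive Gronwall bound. Your bootstrap and the resulting exponent $3-\beta$ match the statement.

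The execution in the paper is different from the route you sketch. Rather than passing to the fast angle $\varphi$ and performing the averaging/integration-by-parts there, the authors write $\xi=\eps^{-1}(x-B)$, Taylor-expand $F$ about $B(t)$ to third order, and \emph{exhibit explicitly} a modified radius
\[
\varrho(\xi,t)=|\xi|\Big[1-\tfrac{\eps^2}{2}\,\xi^\perp\!\cdot A(t)\xi+\tfrac{\eps^3}{2}\,(\text{explicit cubic in }\xi\text{ built from }D_x^2F)\Big],
\]
with $A(t)=D_xF(B(t),t)$, and then check by direct algebra (using that $A$ is traceless and the Hessian identities forced by $\mathrm{div}\,F=0$) that $\dot\varrho=\eps^2 P_\eps(\xi)$ with $P_\eps$ smooth and $O(|\xi|^3)$. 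From $|\dot\varrho|\le C\eps^2$ on the bootstrap set $\{|\xi|\le 2\}$ the conclusion follows in a few lines. This is the near-identity transformation that your averaging procedure would produce, simply written down and verified; it avoids the change of independent variable $t\leftrightarrow\varphi$ and the bookkeeping of per-turn errors over $\sim\eps^{-2-\beta}$ revolutions that you flagged as the main technical obstacle. Conversely, your formulation makes the role of the fast rotation and of the hypothesis $\mathrm{div}\,F=0$ more transparent, and explains structurally why two orders are gained. One small point: your sentence ``net change of $\rho$ per turn is $O(\rho^3)$'' silently uses that the \emph{second} term $y\cdot(y\cdot H y)$, being odd in $y$, also has zero angular mean; this is the reason the paper includes an $\eps^3$ cubic correction in $\varrho$ and assumes bounds on $D_x^2F$ and $\partial_tD_x^2F$.
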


\begin{proof}
A straightforward computation shows that the problem \eqref{p:1} written in term of the unknown $\xi(t):=\eps^{-1}(x(t) - B(t))$ reads,
\be
\label{p:2}
\dot \xi(t) = A(t) \xi(t) + \frac\eps 2 \, \xi(t) \cdot H(t) \xi(t) + \eps^2 Q_\eps(\xi(t),t) - \frac{\xi(t)^\perp}{\eps^2 |\xi(t)|^2} \;, \quad \xi(0)=\xi_0\;,
\ee
where 
\begin{equation*}
A(t) := D_xF(B(t),t)\;, \quad H(t) := D_x^2 F(B(t),t) = \begin{pmatrix} H^{(1)}(t) \\ H^{(2)} (t) \end{pmatrix}\;, 
\end{equation*}
with $H^{(i)}(t) = D_x^2 F_i(B(t),t)$, $i=1,2$, and the short notation $\xi(t) \cdot H(t) \xi(t)$ means
\begin{equation*}
\xi(t) \cdot H(t) \xi(t) = \begin{pmatrix} \xi(t) \cdot H^{(1)}(t) \xi(t) \\ \xi(t)\cdot H^{(2)}(t) \xi(t) \end{pmatrix}\;.
\end{equation*}
Finally, the remainder
\begin{equation*}
Q_\eps(\xi,t) := \eps^{-2}\bigg[F(B(t)+\eps \xi,t) - F(B(t),t) - \eps A(t) \xi - \frac\eps 2 \, \xi \cdot H(t) \xi \bigg]
\end{equation*}
is a smooth function of $\xi$, $t$, and $\eps$, such that $Q_\eps(\xi,t)=\mc O (|\xi|^3)$
(uniformly with respect to $(t,\eps)\in\bb R\times [0,1]$).

Since $\mathrm{div}\,F = 0$ the matrix $A(t)$ is traceless, i.e., it has the form $\begin{pmatrix}a(t) & b(t) \\ c(t) & -a(t) \end{pmatrix}$. Moreover, by differentiating the identity $\mathrm{div}\,F = 0$ with respect to the spatial variables, we have that 
\begin{equation*}
\frac{\partial^2 F_1}{\partial x_1\partial x_2} = - \frac{\partial^2 F_2}{\partial x_2^2}\;, \quad \frac{\partial^2 F_2}{\partial x_1\partial x_2} = - \frac{\partial^2 F_1}{\partial x_1^2}\;,
\end{equation*}
so that the matrices $H^{(i)}(t)$, $i=1,2$ take the form,
\begin{equation*}
H^{(1)}(t) = \begin{pmatrix} h(t) & - p(t) \\ - p(t) & q(t) \end{pmatrix}\;, \quad 
H^{(2)}(t) = \begin{pmatrix} r(t) & - h(t) \\ - h(t) & p(t) \end{pmatrix}\;.
\end{equation*}

The dynamics defined by \eqref{p:2} is characterized by the fact that the polar angle of $\xi$ evolves in time much more quickly than the radius $|\xi|$. By the averaging principle, we then expect the existence of a slowly varying quantity which remains close to the radius $|\xi|$. This quantity can be easily identified by standard perturbation theory, but, for the sake of brevity, we prefer to give its explicit expression without justification, and then proving, by direct inspection, that it is slowly varying in time when computed along the solutions to \eqref{p:2}. With this premise, we introduce the following quantity
\be
\label{p:3}
\varrho(\xi,t) := |\xi| \bigg[1 - \frac{\eps^2}2 \xi^\perp \cdot A(t) \xi + \frac{\eps^3}{2} \bigg(\frac{r(t)}3 \xi_1^3 -  h(t) \xi_1^2\xi_2 + p(t) \xi_1\xi_2^2 - \frac{q(t)}3 \xi_2^3 \bigg)\bigg]
\ee
and compute its time derivative along the solution to \eqref{p:2}. We have (omitting the explicit dependence on $t$),
\begin{equation*}
\begin{split}
\dot \varrho & = \frac{\xi\cdot \dot \xi}{|\xi|} \bigg[1 - \frac{\eps^2}2 \xi^\perp \cdot A\xi + \frac{\eps^3}{2} \bigg(\frac{r}3 \xi_1^3 -  h \xi_1^2\xi_2 + p \xi_1\xi_2^2 - \frac{q}3 \xi_2^3  \bigg)\bigg] \\ & \quad - \frac{\eps^2}2 |\xi|\, \xi^\perp \cdot\dot A \xi - \frac{\eps^2}2 |\xi| \bigg(\xi^\perp \cdot A \dot \xi + \dot \xi^\perp \cdot A\xi \bigg) \\ & \quad + \frac{\eps^3}{2}|\xi| \bigg(\frac{\dot r}3 \xi_1^3 -  \dot h \xi_1^2\xi_2 + \dot p \xi_1\xi_2^2 - \frac{\dot q}3 \xi_2^3  \bigg) \\ &\quad + \frac{\eps^3}{2}|\xi| \big(r \xi_1^2\dot\xi_1 - h\xi_1^2\dot\xi_2 - 2 h \xi_1\xi_2\dot \xi_1 + p\dot\xi_1\xi_2^2 + 2p \xi_1\xi_2\dot\xi_2 -q\xi_2^2\dot\xi_2 \big)\;.
\end{split}
\end{equation*}
Using the right-hand side of \eqref{p:2} to express the time derivatives of $\xi(t)$, after some straightforward computations we get,
\begin{equation*}
\begin{split}
 \\ \dot \varrho & = \frac{\xi \cdot A\xi}{|\xi|} + \eps\frac{\xi\cdot (\xi\cdot H\xi)}{2|\xi|} + \frac{\xi^\perp \cdot A \xi^\perp}{2|\xi|} + \frac{(\xi^\perp)^\perp \cdot A \xi}{2|\xi|} \\ & \quad - \frac{\eps}{2|\xi|} \big(r \xi_1^2\xi_2 +h\xi_1^3 - 2 h \xi_1\xi_2^2 +p\xi_2^3 - 2p \xi_1^2\xi_2 + q\xi_1\xi_2^2 \big) + \eps^2 P_\eps(\xi)\;,
\end{split}
\end{equation*}
where $P_\eps(\xi)$ is a smooth function of $\xi$, $t$, and $\eps$, such that $P_\eps(\xi,t)=\mc O (|\xi|^3)$.

Using that $A$ is traceless, a calculation shows that  $(A\xi)^\perp = - A^T \xi^\perp$, with $A^T$ the transpose of $A$. Therefore, noticing also that $(\xi^\perp)^\perp = - \xi$, 
\begin{equation*}
\begin{split}
 \xi^\perp \cdot A \xi^\perp = A^T \xi^\perp \cdot \xi^\perp = - (A\xi)^\perp \cdot \xi^\perp  = - \xi \cdot A\xi\;,\qquad (\xi^\perp)^\perp \cdot A \xi = - \xi \cdot A \xi\;.
\end{split}
\end{equation*}
Moreover,
\begin{equation*}
\begin{split}
\xi\cdot (\xi\cdot H\xi) & = \xi_1(h\xi_1^2+q\xi_2^2-2p\xi_1\xi_2) + \xi_2(r\xi_1^2+p\xi_2^2-2h\xi_1\xi_2) \\ & = r \xi_1^2\xi_2 +h\xi_1^3 - 2 h \xi_1\xi_2^2 +p\xi_2^3 - 2p \xi_1^2\xi_2 + q\xi_1\xi_2^2\;.
\end{split}
\end{equation*}
Inserting the above identities in the expression for $\dot\varrho$ we obtain,
\be
\label{p:4}
\dot \varrho = \eps^2 P_\eps(\xi) \;.
\ee

Now we set
\begin{equation*}
\tau := \sup\{t\in I\cap \bb R_+\colon|\xi(s)|<2\;\; \forall\, s\in [0,t]\}\;, 
\end{equation*}
noticing that $\tau>0$ since $|\xi_0|=1$ and $\xi(t)$ is continuous. By \eqref{p:3} and \eqref{p:4} there are $\eps_1\in (0,1]$ and $C>0$ such that, for any $\eps\in (0,\eps_1]$ and $t\in [0,\tau)$,
\be
\label{p:5}
|\varrho(t) - |\xi(t)|| \le C \eps^2 |\xi(t)|\;,\quad \frac 12 |\xi(t)| \le \varrho(t) \le \frac 32 |\xi(t)|\;, \quad |\dot \varrho(t)| \le C\eps^2 |\xi(t)|
\ee
(the second inequality comes from the first one for $2C\eps_1^2<1$).  From the above estimates we obtain the following integral inequality, valid for any $\eps\in (0,\eps_1]$,
\begin{equation*}
|\varrho(t)-\varrho(0)| \le 2C\eps^2 \tau \varrho(0) + 2C\eps^2 \int_0^t\!\rmd s\,|\varrho(s)-\varrho(0)| \quad \forall\,t\in [0,\tau)\;,
\end{equation*}
which implies, 
\be
\label{p:6}
\max_{s\in [0,t]} |\varrho(s)-\varrho(0)| \le 2C \eps^2\tau \varrho(0) + 2C\eps^2\tau\max_{s\in [0,t]} |\varrho(s)-\varrho(0)| \quad \forall\,t\in [0,\tau)\;.
\ee
Given $\beta\in (0,1)$ we choose $\eps_0\in (0,\eps_1]$ such that $K_0:=12C\eps_0^{2-\beta} < 1$. Then, setting $\tau_0 = \tau\wedge\eps^{-\beta}$, from \eqref{p:6} we deduce that, for any $\eps\in (0,\eps_0)$,
\begin{equation*}
|\varrho(t)-\varrho(0)| \le 4C \eps^{2-\beta} \varrho(0) \quad \forall\,t\in [0,\tau_0)\;.
\end{equation*}
Moreover, recalling $|\xi_0|=1$ and using the first two inequalities in \eqref{p:5}, for any $\eps\in (0,\eps_0)$,
\begin{equation*}
\begin{split}
||\xi(t)|-1| & \le ||\xi(t)|-\varrho(t)| + |\varrho(t)-\varrho(0)|+|\varrho(0)-|\xi(0)|| \\ & \le 2C\eps^2 \varrho(t) +  4C \eps^{2-\beta} \varrho(0) + 2C\eps^2\varrho(0) \\ & \le 6C(\eps^2 + \eps^{2-\beta}) \le K_0<1 \quad \forall\,t\in [0,\tau_0)\;.
\end{split}
\end{equation*}
This implies $\tau_0\in I$ and, by continuity, $|\xi(\tau_0)| < 2$. From the definition of $\tau$ we conclude that $\tau_0 = \eps^{-\beta}$ for any $\eps\in (0,\eps_0)$. As $|x(t)-B(t)| - |x_0 - z_*| = \eps (|\xi(t)|-1)$, this proves both claims (i) and (ii) of the theorem.
\end{proof}

\begin{remark}
\label{rem:4.1}
It is worthwhile to remark that the condition $\mathrm{div}\,F(x,t)=0$ plays a crucial role for the validity of the above result. Indeed, if the matrix $ A (t) $ is not traceless, the averaging effect does not cancel, in general, the lowest order of the expansion of $\dot \varrho$, due to the possible presence of secular terms in the averaged system. Moreover, the not trivial fact is that such cancellation turns out to be valid up to the second order and not only to the first one as expected in general. As we shall see in the next remarks, this fact allow one to enhance the justification of the point vortex model via radially symmetry given in textbooks such as \cite{Bat}.   
\end{remark}

\begin{remark}
\label{rem:4.1b}
We observe that from the proof of the theorem, more precisely in view of  the estimates \eqref{p:5} and \eqref{p:6}, it is easy to deduce also that if $c>0$ is chosen sufficiently small then $|x(t)-B(t)| < 2 \eps$ for any  $t\in [0,c\,\eps^{-2}]$ and any $\eps$ small enough. On the other hand, only the better estimates obtained in the shorter time interval $[0,\eps^{-\beta}]$, $\beta\in (0,1)$, as stated in claims (i) and (ii) of the theorem, are physically relevant, since they allow one to deduce - at least heuristically - features of the long time behavior of the Euler dynamics. See the next two remarks.
\end{remark}

\begin{remark}
\label{rem:4.2}
To better understand the connection between the toy model and the underlying Euler dynamics, we come back to the situation depicted at the beginning of the section. More precisely, we consider the reduced model of Subsection \ref{sec:2.1} with initial configuration,
\be
\label{p:9}
\omega_\eps(x,0) = \eps^{-2} \gamma(|x-z_*|/\eps)\;,
\ee
where $\gamma(r)$, $r\ge 0$, is a nonnegative smooth function with support $[0,1]$ such that $\int\!\rmd x\, \gamma(|x|) = 2\pi \int_0^1\! \rmd r\, \gamma(r) = 2\pi$. 

Consider the trajectory $x(x_0,t)$ of the fluid particle initially in $x_0$, i.e., the solution to \eqref{Cons4} with $u(x,t) = \int\!\rmd y\, K(x - y)\,\omega_\eps(y,t)$.  We next show that the toy model is consistent with \eqref{Cons4} if the vorticity remains radially distributed around the moving center $B(t)$. 

To this purpose, we assume that $\omega_\eps(x,t) = \eps^{-2} \gamma_t(|x-B(t)|/\eps)$, where $\gamma_t$ is a nonnegative smooth function with support in $[0,q_t]$ for some $q_t>0$, and $\int_0^1\! \rmd r\, \gamma_t(r) = 1$.  The corresponding velocity field can be easily computed, see, e.g., \cite[Eq.\ (2.14)]{BM}: letting $x-B(t) = r \mathbf{e}$ with $\mathbf{e} = (\cos\theta,\sin\theta)$, it is readily seen that $u(x,t) = \alpha (r)\, \mathbf{e}^\perp$ with
\begin{equation*}
\alpha(r) = -\frac{1}{\eps^2}\int_0^1\!\rmd\sigma\, \sigma\,\gamma_t(r\sigma/\eps) = \begin{cases} -r^{-1} & \text{if } r>\eps q_t\;, \\ -r^{-1}\int_0^{r/\eps}\!\rmd k\, k\, \gamma_t(k) & \text{if } r\le \eps q_t\;. \end{cases}
\end{equation*}

This shows that the field $u(x,t)$ produced by a symmetric blob is approximately equals to $g(x,B(t))$, see \eqref{g}, provided $x$ is far away from the center of vorticity. 

In absence of the external field $F$, the symmetry assumption is rigorously true at any time, since the vorticity distribution is stationary, $\omega_\eps(x,t) = \eps^{-2}\gamma(|x-z_*|/\eps)$, and each fluid particle performs a uniform circular motion around $z_*$. 

In presence of an external field $F$, each fluid particle departs, in general, from a uniform circular motion, thus destroying the initial symmetry of the vorticity distribution; this, in turn, produces a velocity $u(x,t)$ with a non-zero radial component, that enhances this effect.

On the other hand, assuming the vorticity ``frozen'' in a symmetric distribution, Theorem \ref{thm:6} establishes that each fluid particle remains very close to a circular motion for a very long time, due to an averaging mechanism that reduces the effect of the external field. Of course, this is only a toy model, because in the real case the vorticity distribution does not remain radially symmetric, and the problem becomes much more difficult. 

However, the analysis of the toy model suggests that this averaging mechanism could decrease the development of a non-symmetric component of the vorticity, thus preserving its concentration on long time scales. In Remark \ref{rem:4.3} below, we strengthen this conjecture by giving a more quantitative argument in the special case of the vortex patch dynamics. 
\end{remark}

\begin{remark}
\label{rem:4.3}
The name \textit{vortex patch dynamics} refers to the evolution of piecewise-constant vorticity configurations. It is well known that such configurations preserve their structure in time, and the problem of their time evolution is reduced to the so-called \textit{contour dynamics}, which governs the evolution of the boundaries of the regions with constant vorticity.

With this premise, we consider the reduced model of Subsection \ref{sec:2.1}, but now with initial configuration uniformly distributed in the disk $\Sigma(z_*|\eps)$, i.e.,
\be
\label{p:10}
\omega_\eps(x,0) = \begin{cases} 1/\eps^2 & \text{if } |x-z_*|\le\eps\,, \\ 0 & \text{otherwise}\,. \end{cases}
\ee

The evolved configuration $\omega_\eps(x,t)$ is a step function, equals to $1/\eps^2$ on its support $\Lambda_\eps(t)$. Clearly, $\Lambda_\eps(t) = \Sigma(z_*|\eps)$ in the absence of the external field $F$. Now, the study of the toy model in Theorem \ref{thm:6} suggests that the effect of the external field $F$ is such that $\Sigma_\eps^-(t) \subset \Lambda_\eps(t) \subset \Sigma_\eps^+(t)$ for any $t\in [0,\eps^{-\beta}]$, where $\Sigma_\eps^\pm(t) = \Sigma(B(t)|\eps \pm c\,\eps^{3-\beta})$ for a suitable $c>0$. 

Let us evaluate the velocity $u(x,t)$ of a fluid particle located at $x\in \Sigma_\eps^+(t)\setminus \Sigma_\eps^-(t)$. We decompose $u(x,t) = u_-(x,t) + u_+(x,t)$ with
\begin{equation*}
u_-(x,t) = \frac{1}{\eps^2} \int_{\Sigma_\eps^-(t)}\!\rmd y\, K(x-y)\;, \quad 
u_+(x,t) = \frac{1}{\eps^2} \int_{\Lambda_\eps(t) \setminus \Sigma_\eps^-(t)}\!\rmd y\, K(x-y)\;.
\end{equation*}

The component $u_-(x,t)$ is directed along $(x-B(t))^\perp$ by symmetry, and its intensity can be computed by arguing as in Remark \ref{rem:4.2}, getting
\begin{equation*}
u_-(x,t) = - \frac{1-c\,\eps^{2-\beta}}{\eps|x-B(t)|}\frac{(x-B(t)^\perp}{|x-B(t)|} = (1-c\,\eps^{2-\beta}) \, g(x,B(t))\;,
\end{equation*}
with $g$ as in in \eqref{g}.

The component $u_+(x,t)$, due to the fluid particles in the annulus $\Sigma_\eps^+(t)\setminus\Sigma_\eps^-(t)$, is the ``dangerous part'' of $u(x,t)$, since it has nonzero component along $x-B(t)$, but we claim that it is negligible as $\eps\to 0$. To this purpose, we decompose the annulus into the union of $2N+1$ disjoint annulus sectors $B_j$, $j=-N,\ldots, N$, of equal length $2\pi\eps/(2N+1) \simeq \eps^{3-\beta}$ and such that $x\in B_0$. We then estimate,
\begin{equation*}
|u_+(x,t)| \le \frac{1}{2\pi\eps^2} \int_{B_{-1}\cup B_0\cup B_1}\!\rmd y\, \frac{1}{|x-y|} + \sum_{|j|=2}^N \frac{1}{2\pi\eps^2} \int_{B_j}\!\rmd y\, \frac{1}{|x-y|}\;.
\end{equation*}
The first integral can be bounded by a rearrangement as done in \eqref{h2}, while the other integrals are easily controlled  noticing that $|y-x|\ge \const \eps^{3-\beta}|j|$ for $y\in B_j$ and $|j|=2,\ldots, N$. Therefore, denoting by $|B_j|$ the area of $B_j$, 
\begin{equation*}
\begin{split}
|u_+(x,t)| & \le \frac1{\eps^2} \sqrt{|B_{-1}\cup B_0\cup B_1|/\pi} + \const \frac1{\eps^2} \sum_{|j|=2}^N \frac{|B_j|}{\eps^{3-\beta}|j|} \\ & \le \const \big(\eps^{1-\beta} + \eps^{1-\beta}\log N \big) \le \const \eps^{1-\beta}|\log\eps| \;,
\end{split}
\end{equation*}
having used that $|B_j|\le \const \eps^{6-2\beta}$ and $N\le \const\eps^{-2+\beta}$.

In conclusion, we have shown that the difference $\delta u(x,t) = u(x,t)-g(x,B(t))$ is negligible in the limit $\eps\to 0$, validating the initial assumption of neglecting its effect with respect to that of the external force $F$. Clearly, this remains a heuristic argument, to make a rigorous proof we should get also a good control on the smoothness properties of $\delta u(x,t)$.
\end{remark}

\end{document}